\def\arXiv#1{\href{http://arxiv.org/abs/#1}{arXiv:#1}}
\def\?[#1]{\textbf{[#1]}\marginpar{\Large{\textbf{??}}}}
\def\smallsection#1{\smallskip\noindent\textbf{#1}.}
\let\epsilon=\varepsilon 
\newcommand{\RR}{{\mathbb R}}
\newcommand{\CC}{{\mathbb C}}
\newtheorem{theo}{Theorem}
\newtheorem{prop}{Proposition}[section]	
\newtheorem{defi}[prop]{Definition}
\newtheorem{ass}{Assumption}
\newtheorem{ex}{Example}
\numberwithin{equation}{section}
\DeclareMathOperator{\Spec}{Spec}
\DeclareMathOperator{\tr}{tr}
\def\indic{\operatorname{1\hskip-2.75pt\relax l}}
\newcommand\reallywidehat[1]{\arraycolsep=0pt\relax%
\begin{array}{c}
\stretchto{
  \scaleto{
    \scalerel*[\widthof{\ensuremath{#1}}]{\kern-.5pt\bigwedge\kern-.5pt}
    {\rule[-\textheight/2]{1ex}{\textheight}} 
  }{\textheight} %
}{0.5ex}\\           
#1\\                 
\rule{-1ex}{0ex}
\end{array}
}
\title[Quantum statistical learning]{Quantum statistical learning via Quantum Wasserstein natural gradient}
\author{Simon Becker}
\email{simon.becker@damtp.cam.ac.uk}
\address{DAMTP, University of Cambridge, Wilberforce Road, Cambridge CB3 0WA, UK}
\author{Wuchen Li}
\email{wuchen@mailbox.sc.edu}
\address{Department of Mathematics, University of South Carolina, USA}
\begin{document}

\begin{abstract}
In this article, we introduce a new approach towards the statistical learning problem $\operatorname{argmin}_{\rho(\theta) \in \mathcal P_{\theta}} W_{Q}^2 (\rho_{\star},\rho(\theta))$ to approximate a target quantum state $\rho_{\star}$ by a set of parametrized quantum states $\rho(\theta)$ in a quantum $L^2$-Wasserstein metric. We solve this estimation problem by considering Wasserstein natural gradient flows for density operators on finite-dimensional $C^*$ algebras. For continuous parametric models of density operators, we pull back the quantum Wasserstein metric such that the parameter space becomes a Riemannian manifold with quantum Wasserstein information matrix. Using a quantum analogue of the Benamou-Brenier formula, we derive a natural gradient flow on the parameter space. We also discuss certain continuous-variable quantum states by studying the transport of the associated Wigner probability distributions. 
\end{abstract}
\keywords{Quantum transport information geometry; Quantum state estimation; Quantum Wasserstein information matrix; Quantum Wasserstein natural gradient; Quantum Schr{\"o}dinger bridge problem.}
\maketitle

\section{Introduction}
The learning problem of quantum states, i.e. positive-definite trace class operators of unit trace, is central in modern quantum theory and commonly called \emph{quantum state tomography}. The problem of quantum state estimation is ubiquitous in quantum mechanics and has a wide range of applications: This includes the analysis of optical devices \cite{DLPPS02} as well as the reliable estimation of qubit states in quantum computing \cite{BK10,LR09}. Until this day, there have been many recent computationally efficient approaches towards the quantum state estimation problem based on compressed sensing and machine learning methods such as \cite{GLFBE,TMC18}. For a review of the most common classical approaches towards quantum state estimation, such as Maximal likelihood estimation (MLE), we refer to \cite{PR04}.

\medskip 
However, both in physics and non-commutative geometry, many problems come as a quantum state estimation problem in disguise:
Over the past years, finding suitable physical descriptors for molecular structures from data has become a vast and growing area of research, cf. the review article \cite{SMB19} and references therein. Recently, such quantum machine learning approaches have also been based on optimization problems in Wasserstein distances, see for example \cite{CLB20}, where a kernel ridge regression-based model relying on the Coulomb matrix is studied. The advantage of using the Wasserstein distance is that it leads to a continuous dependence on the position of the nuclei. 

In said article, it has been discovered that it is key to use a suitable parametrization of the Coulomb matrix. This parametrization is ought to be invariant under 3D translations and rotations of the molecule and therefore related to the low-dimensional parametrization problem considered in this and previous articles, cf. \cite{CL18}. Also, first attempts towards quantum Wasserstein generative adversarial networks have been considered in \cite{CHLFW19}. The quantum Wasserstein distance and its generalizations considered in \cite{CGT,CGT2} have also far-reaching applications beyond quantum mechanics to the field of non-commutative probability theory which includes multivariable time series and vector-valued random variables \cite{NGT}. Hence, solving the quantum state estimation problem in Wasserstein distance has become an important and widely applicable problem.

The analysis of geometric properties of the space of quantum states is called quantum information geometry and is central in the field of quantum information. The asymptotic theory of quantum state estimation and quantum information geometry has been developed in the second half of the 1980s by Nagaoka \cite{N95}. A comprehensive review of the modern field of quantum information geometry and its connection to quantum estimation can be found in \cite{H17}. In this article, we develop a new connection between these two fields based on the quantum Wasserstein metric. 

It has been discovered, among others, by Otto \cite{O01}, that various PDEs evolve according to the gradient flow with respect to the $L^2$-Wasserstein metric \cite{L1998}. Later, Carlen and Maas introduced in a series of articles \cite{CM14,CM18,CM20} also quantum Wasserstein metrics for open quantum systems, satisfying a detailed balance condition. In these articles, they showed, that such open quantum systems also evolve according to the $L^2$-Wasserstein gradient flow. Moreover, they also showed that their metric allows for a dynamical formulation extending the classical Benamou-Brenier formula \cite{BB} to the quantum setting. Here, we also mention the work by Datta and Rouz\'e \cite{RD,RD2} for additional links to a quantum version of Ricci curvature and Fisher information functional. This analysis has been complemented by articles \cite{CGT,CGT2} where different types of non-commutative multiplication operators are considered with favorable properties from a computational point of view. Besides, Carlen and Maas showed that for certain open quantum systems the gradient flow of the relative entropy with respect to an invariant state in the quantum Wasserstein metric coincides with the quantum evolution governed by the Lindblad equation. For continuous-variable states, a quantum transport framework with desirable physical features has been proposed in \cite{DPT19}. However, a dynamical formulation of this approach does not seem to exist, yet. Results on the entropy flow for open quantum systems have also been obtained in \cite{MM17}. Another relevant definition of the Wasserstein distance is due to Golse, Mouhot, and Paul \cite{GMP} and has been proposed in the study of uniform mean-field limits of quantum systems in the semiclassical parameter.

Recently, optimal transport gradient flows have been applied to estimation problems in classical probability theory. In particular, the parameter estimation problem of probability measures by using parameterized Wasserstein gradient flows on either Kullback-Leibler (KL) divergence, also referred to as relative entropy, or $L^2$-Wasserstein distance has been addressed by the second author \cite{CL18,LM18,LM20}. This leads to a joint study between optimal transport \cite{OT} and information geometry \cite{IG, IG2}, namely transport information geometry \cite{Li1,Li2}. Here, the natural gradient induced by optimal transport is first applied for statistical learning problems. Meanwhile, this approach also introduces a new estimation theory based on Wasserstein information matrix \cite{LZ20}. It also develops new scientific computing algorithms by the generative adversarial network to solve classical Fokker-Planck equations, in data-poor situations \cite{LLZZ19}.

In this article, we present a new approach towards quantum state estimation based on $L^2$-quantum Wasserstein gradients. We extend the study of the previous paragraph to quantum systems. We start by studying the problem of minimizing the distance with respect to a quantum Wasserstein metric $d$, for some fixed target density operator $\rho_{\star}$ over a parametrized manifold of states  $\mathcal P_{\theta} \subset \mathscr D(\mathcal H)$, i.e. we aim to identify $ \operatorname{argmin}_{\rho \in \mathcal P_{\theta}} d(\rho_{\star},\rho).$ We address the corresponding estimation problem for particular finite and infinite-dimensional quantum states. In the case of infinite-dimensional states, our approach towards statistical learning is based on the Wigner transform of continuous-variable quantum states. This makes this approach particularly tailored to experimental quantum state estimation in continuous-variable systems, where the Wigner distribution of the quantum state is approximately recovered \cite{VR89}. A classical choice of the distance between probability measures is the Kullback-Leibler (KL) divergence. In classical probability, the metric induced by the $L^2$ Hessian of the KL divergence is the \emph{Fisher-Rao metric} which provides a natural gradient descent method. The analogous concepts of relative entropy for faithful states $\rho$ and $\sigma$
\[ S(\rho \Vert \sigma) = - \operatorname{tr}(\rho(\log(\rho)-\log(\sigma))) \]
and Fisher information is well-established in quantum information theory, too. For finite-dimensional quantum states, our aim is then to establish low-dimensional parameterized quantum Wasserstein gradient flows based on quantum Wasserstein distances. This means we aim to find a low-dimensional representation of the minimization problem in parameter space by applying quantum Wasserstein dynamics. Our study starts by pulling back the quantum Wasserstein metric to a finite-dimensional parameter manifold, using the quantum transport (Wasserstein) information matrix. This leads to a natural gradient descent method for quantum states.

We also introduce and study a quantum analog of the Schr\"odinger bridge problem. As we show in this article, this problem can be solved by a quantum Benamou-Brenier's formula with quantum Fisher information functional regularization. 

\smallsection{Summary of novel results:}
\begin{itemize}
\item We introduce the quantum transport information matrix and develop the related quantum transport/Wasserstein statistical manifold. 
This can be viewed as the first step of quantum transport information geometry. 
\item We formulate the quantum transport natural gradient flow based on quantum Wasserstein statistical manifold. We apply this flow for solving the quantum statistical learning problem.  
\item We also formulate the quantum Schr{\"o}dinger bridge problem by controlling the quantum transport natural gradient flows. 
\item We study the quantum Wasserstein statistical manifold for various finite-dimensional systems such as the quantum fermionic Fokker-Planck dynamics and more general finite-dimensional open quantum systems satisfying the detailed balance condition, as well as for continuous-variable systems with positive Wigner functions such as (mixtures of) Gaussian states.
\item We illustrate our results on some simple examples and also discuss how they apply to the parameter estimation problem for quantum channels.
\end{itemize}
\smallsection{Outline of the article}
In Section \ref{sec:Review} we provide a brief review of classical optimal transport theory and quantum optimal transport, i.e.  
\begin{itemize}
\item Classical optimal transport, Sec.\ref{sec:Review1}.
\item Natural gradient flow, Sec.\ref{sec:NGF}.
\item Schr\"odinger bridge problem, Sec.\ref{sec:clSBP}.
\item Quantum optimal transport, Sec.\ref{sec:QOT}.
\item Fermionic Fokker-Planck equation, Sec.\ref{subsec:FFP}.
\item Quantum Markov semigroups satisfying detailed balance (DB), Sec.\ref{sec:QMSDB}.
\end{itemize}
In Section \ref{sec:DBC} we then introduce the quantum Wasserstein natural gradient \ref{eq:GFFF}, the Schr\"odinger bridge problem for finite-dimensional quantum systems in Section \ref{sec:QSBP}, and the same two for certain continuous-variable systems, including Gaussian systems, in Section \ref{sec:CVS}. In Section \ref{sec:Ex} we discuss examples of our theory. This includes the transport problem for two Gaussian states and a fully explicit case of the fermionic Fokker-Planck equation. We finish our collection of examples by illustrating how the quantum transport information matrix can also be used to perform parameter estimation for quantum channels. 

\smallsection{Notation}
We denote by states $\vert n \rangle$, for $n \in \mathbb N_0$,  the canonical eigenbasis of the number operator $N= a^*a$ where $a$ is the standard annihilation operator. 
The continuous linear operators on a normed space $X$ are denoted by $\mathcal L(X)$, the space of trace-class operators on a Hilbert space $\mathcal H$ by $\operatorname{TC}(\mathcal H).$ For a set $\Omega$ we denote by $\operatorname{int}(\Omega)$ its interior. The set of quantum states (positive-definite operators of unit trace) on a Hilbert space $\mathcal H$ is denoted by $\mathscr D(\mathcal H).$  We denote the Riemannian manifold of faithful states by $\mathscr D_+(\mathcal H).$ We recall that $\partial \mathscr D(\mathcal H)$ are states with zero determinant and $\operatorname{int}(\mathscr D(\mathcal H)) = \mathscr D_+(\mathcal H).$ We also write $\{X, Y\}=XY+YX$ for the anti-commutator and $[X,Y]=XY-YX$ for the commutator. We denote the spectrum of a linear operator $T$ by $\Spec(T).$

\section{Review of classical and quantum optimal transport}
\label{sec:Review}

Our goal is to study the problem of minimizing the distance with respect to a $L^2$-quantum Wasserstein distance $ W^2_Q$, for some fixed target density operator $\rho_{\star}$ over a parametrized manifold of states  $\mathcal P_{\theta} \subset \mathscr D(\mathcal H)$, i.e. we aim to identify $ \operatorname{argmin}_{\rho \in \mathcal P_{\theta}} W^2_Q(\rho_{\star},\rho)$.

For this purpose, we start in this section with a review of the classical framework and highlight similarities and differences that appear in the quantum setting. In addition, we will also employ the classical framework for the study of Wigner distributions in the continuous-variable setting.

\subsection{Classical optimal transport}
\label{sec:Review1}

The optimal transport problem dates back to 1781 when Monge asked how to find for two probability measures $f_0,f_1$ on $\Omega \subset \RR^n$, with finite second moment, an optimal transport plan $T:\Omega \rightarrow \Omega$ pushing $f_0$ to $f_1$ such that the transportation cost is minimized and for all $A \subset \Omega$ measurable
\[ \inf_T \int_{\Omega} \Vert x- T(x) \Vert^2 f_0(x) \ dx : T_{*}f_0  = f_1\]

For two probability measures with densities $f_0,f_1$ on $\Omega \subset \RR^n$ the square of the classical $L^2$-Wasserstein distance is defined as 
\begin{equation}
\label{eq:L2Wasserstein}
W^2_{\text{cl}}(f_0,f_1):=\inf_{\pi \in \Pi(f_0,f_1)} \int_{\Omega \times \Omega} \vert x-y \vert^2 d\pi(x,y) 
\end{equation}
where $\Pi(f_0,f_1)$ is the set of all couplings of the two measures $f_0(x) \ dx$ and $f_1(x) \ dx.$

Equivalent to \eqref{eq:L2Wasserstein}, and particularly relevant for our purposes, is a dynamical formulation, given by the Benamou-Brenier formula, which states that the Wasserstein metric is given by
\begin{equation}
\label{eq:BBF}
 W^2_{\text{cl}}(f_0,f_1)=\inf \int_0^1 \int_{\Omega} \vert v_t(x) \vert^2 \ d\mu_t(x)  dt 
 \end{equation}
where the infimum is taken over all pairs $(\mu_t,v_t)$ where $\mu_t$  with $\mu_0=f_0 $ and $\mu_1 = f_1 $ is a curve of measures and $v_t$ a time-dependent vector field satisfying 
\[ \partial_t \mu_t + \operatorname{div}(v_t \mu_t) = 0.\]
On a bounded domain $\Omega$ the above formulation is replaced by the corresponding Neumann problem. 

The dynamical formulation above is closely connected to a Riemannian structure on the Wasserstein space. To fix ideas, we consider the space of strictly positive densities 
\[ \mathscr D_{+}(\Omega)=\{ f \in C^{\infty}(\Omega, (0,\infty)): \Vert f \Vert_{L^1}=1\}.\]

The tangent space of $\mathscr D_{+}$ is then just given by
\[ T_{f}\mathscr D_{+}(\Omega)=\left\{ \sigma \in C^{\infty}(\Omega): \int_{\Omega} \sigma(x) \ dx =0\right\}.\]

For any $\Phi \in C^{\infty}(\Omega)$ we can then set
\[ V_{\Phi}(x) := - \operatorname{div}(f(x) \nabla \Phi(x)) \in T_{f}\mathscr D_{+}(\Omega).\]

This map provides an isomorphism, at least if $\Omega$ is compact, 
\[C^{\infty}(\Omega)/ \RR \rightarrow T_{f}\mathscr D_{+}(\Omega), \text{ with }[\Phi] \mapsto V_{\Phi}.\]

We can therefore define the $L^2$-Wasserstein metric tensor by introducing:
\begin{defi}[$L^2$-Wasserstein metric tensor] We define the metric tensor $g_{f}:T_{f}\mathscr D_{+}(\Omega) \times T_{f}\mathscr D_{+}(\Omega) \rightarrow \RR$ by 
\begin{equation}
\label{eq:metrictens}
 g_{f}(\sigma_1,\sigma_2) :=\int_{\Omega} \langle \nabla \Phi_1(x),\nabla \Phi_2(x) \rangle f(x) \ dx, 
 \end{equation}
with $\sigma_i = V_{\Phi_i}.$
\end{defi}

\subsection{Natural gradient flow}
\label{sec:NGF}
We continue with a review of the main results of \cite[Sec. $3$]{CL18} and explain how to minimize an objective function efficiently in parameter space.

We define the statistical parameter space as a $d$-dimensional Riemannian manifold $\Theta$ with connection $D_{\theta}$ and metric tensor $\langle \xi, \eta \rangle_{\theta} = \xi^T G_{\theta} \eta$. We then take a continuous parametrization $\Theta \ni \theta \mapsto \rho(\bullet,\theta) \in \mathscr D_{+}(\Omega)$ and introduce a natural metric tensor by pulling back \eqref{eq:metrictens} on the statistical manifold
\[ g_{\theta}: T_{\theta}(\Theta)^2 \rightarrow \RR, \text{ such that } g_{\theta}(\xi, \eta) = g_{\rho(\bullet, \theta)}(D_{\theta}\rho(\xi),D_{\theta}\rho(\eta))= \langle \xi,G_W(\theta) \eta \rangle \]
where $G_W(\theta) = \left( G_{\theta}^* \  \int_{\Omega} \partial_{\theta_i} \rho(x,\theta) (-\operatorname{div}(\rho(x,\theta)\nabla))^{-1} \partial_{\theta_j}\rho(x,\theta) \ dx \  G_{\theta} \right)_{ij}.$

The Wasserstein natural gradient is then for an objective function $R(\theta)$ defined by 
\[ \dot{\theta}(t) = -\nabla_{g} R(\theta(t))\]
where $\nabla_g$ is the unique gradient vector satisfying 
\[ g_{\theta}( \nabla_g R(\theta),\xi)= \langle D_{\theta} R(\theta), \xi \rangle_{\theta}.\]
In particular, we have the identification $\nabla_g R(\theta) = G_W(\theta)^{-1} G_{\theta} D_{\theta} R(\theta).$

The \emph{Wasserstein gradient descent} can then be numerically implemented using a standard forward Euler method 
\[ \theta_{(n+1)\tau} := \theta_{n\tau}- \tau G_W(\theta_{n\tau})^{-1} G_{\theta} D_{\theta} R(\rho(\bullet, \theta_{n \tau})).\]
This gradient flow method can be interpreted as an approximate solution to the minimization problem
\[ \operatorname{argmin}_{\theta \in \Theta} R(\rho(\bullet, \theta)) + \frac{W^2_{\text{cl}}( \rho(\bullet, \theta_{n \tau}),\rho(\bullet, \theta))^2 }{2 \tau}\]
which is obvious from considering the linearized expressions
\begin{equation}
\begin{split}
W^2_{\text{cl}}( \rho(\bullet, \theta+\Delta \theta),\rho(\bullet, \theta))^2  &= \frac{1}{2} \langle \Delta \theta, G_W(\theta) \Delta \theta \rangle + o((\Delta \theta)^2)  \\
R(\rho(\bullet, \theta+\Delta \theta)) &= R(\rho(\bullet, \theta)) + \langle D_{\theta} R(\rho(\bullet, \theta)), \Delta \theta \rangle_{\theta} + o(\Delta \theta ).
\end{split}
\end{equation}

\subsection{Fisher information regularization and Schr\"odinger bridge problem}
\label{sec:clSBP}
After the works of Monge and Kantorovich, Schr\"odinger proposed in 1931 a similar transport problem which is nowadays referred to as the \emph{Schr\"odinger bridge problem} (SBP):

Given two strictly positive densities $f_0,f_1$ on a domain $\Omega \subset \RR^n$, consider 

\begin{equation}
\label{eq:functional}
\inf_{m,\rho} \int_0^1 \int_{\Omega} \frac{m(t,x)^2}{f(t,x)} \ dx \ dt, 
\end{equation}
where the infimum is taken over all $m$ and $f$ satisfying 
\begin{equation}
\label{eq:FP}
\partial_t f(t,x) + \operatorname{div}(m(t,x)) = \beta \Delta f(t,x), \quad f(0,x)=f_0(x), f(1,x)=f_1(x) 
\end{equation}
with the boundary condition
\[ \langle m(t,x) -\nabla f(t,x), n(x) \rangle=0 \quad \forall x \in \partial \Omega\]
where $n(x)$ is the normal vector of the boundary. We emphasize that the difference between the SBP and the $L^2$-Wasserstein metric minimization \eqref{eq:BBF} is the presence of the diffusion term $\beta \Delta$ in the PDE \eqref{eq:FP}. A discussion of the viscosity limit $\beta \downarrow 0$ and the convergence of the solution to the SBP can be found in \cite{L13}.

The minimization problem \eqref{eq:functional} with PDE \eqref{eq:FP} is, as has been shown in \cite{EG99,CGP16} equivalent to minimizing the functional 
\begin{equation}
\label{eq:SBPro}
\inf_{m, \rho} \int_0^1 \int_{\Omega} \left( \frac{m(t,x)^2}{f(t,x)} + \beta^2 (\nabla \log(f(t,x)))^2 f(t,x) \right) \ dx \ dt + 2\beta \mathcal D(f_1 \vert f_0),
\end{equation}
with a constant term representing the differences of entropies $\mathcal D(f_1\vert f_0) = \int_{\Omega} f_1(x) \log(f_1(x))-f_0(x) \log(f_0(x)) \ dx$ and $f$ and $m$ are linked by the transport equation 
\[ \partial_t f(t,x) + \operatorname{div}(m(t,x))=0, \quad f(0,x)=f_0(x), f(1,x)=f_1(x).\]

The advantage of studying the functional \eqref{eq:SBPro} over \eqref{eq:BBF} is in the additional positivity and strict convexity enforced by the contribution of the Fisher information \[\mathcal I(f):=\int_{\Omega} \vert \nabla \log(f(x)) \vert^2 f(x) \ dx\]
in the objective functional. 
Numerical aspects of this minimization problem have been thoroughly discussed in \cite{LYO18}.

\subsection{Quantum optimal transport}
\label{sec:QOT}

Before introducing quantum analogues of the $L^2$-Wasserstein distance \eqref{eq:L2Wasserstein}, we first define a notion of coupling of quantum states:

For two density operators $\rho_{\mathrm{in}},\rho_{\mathrm{fi}} \in \mathscr D(\mathcal H)$ the set of all couplings $\Pi(\rho_{\mathrm{in}},\rho_{\mathrm{fi}})$ is defined as the set of density operator valued maps that smoothly (up to endpoints) connect the two states
\begin{equation}
\begin{split}
\Pi(\rho_{\mathrm{in}},\rho_{\mathrm{fi}}):=\Bigg\{& \rho \in C([0,1], \mathscr D(\mathcal H)) \cap C^{\infty}((0,1), \mathscr D(\mathcal H));\rho(0)=\rho_{\mathrm{in}}, \rho(1)=\rho_{\mathrm{fi}} \Bigg\}.
\end{split}
\end{equation}

To give the definition of the $2$-Wasserstein distance for finite-dimensional quantum systems satisfying the detailed balance equation, we employ the differential calculus introduced in \cite[Def. $4.7$]{CM20}. This framework allows us, in particular, to reformulate the evolution of finite-dimensional open quantum systems satisfying the detailed balance condition as a gradient flow of the relative entropy $S(\rho \vert \vert \sigma)$ where $\sigma$ is the invariant state, with respect to the Wasserstein metric. Before discussing this in the context of open quantum systems satisfying the detailed balance condition, we introduce the necessary differential structure:
 
\subsubsection{Differential calculus for quantum systems}
\label{subsec:DC} Let $\mathcal A$ be a finite-dimensional von Neumann algebra with faithful positive tracial linear functional $\tau$ and $\mathscr D_{+}(\mathcal A)$ the set of faithful states. 
 \begin{defi}
 A differential structure on $\mathcal A$ is defined as follows:
\begin{itemize}
\item There exists a finite index set $ J$ and for each $j \in J$ a finite-dimensional von Neumann algebra $B_j$ with a faithful positive tracial linear functional $\tau_j.$
\item For each $j \in J$ there exists a pair $(l_j,r_j)$ of unital $*$-homomorphisms from $\mathcal A$ to $B_j$ such that 
\[ \tau_j(l_j(A))= \tau_j(r_j(A))=\tau(A).\]
\item For each $j \in J$ there is $0\neq V_j \in B_j$ and $\bar j$ such that $V_j^*=V_{\bar j}$. Moreover, for $j \in J$ and $A_1,A_2 \in \mathcal A$ 
\[ \tau_j(V_j^*l_j(A_1)V_j r_j(A_2)) =  \tau_j(V_j^*r_{\bar j}(A_1)V_j l_{\bar j}(A_2)).\]
\item There is a faithful state $\sigma \in \mathscr D_{+}(\mathcal A)$ such that for each $j \in J$, $V_j$ is an eigenvector of the modular operator $M_{l_j(\sigma),r_j(\sigma)}(V_j):=l_j(\sigma)V_jr_j(\sigma)^{-1}=e^{-\omega_j} V_j$ for some $\omega_j \in \RR.$
\end{itemize}
 \end{defi}
 Then, the derivatives $\nabla_j: \mathcal A \rightarrow B_j$ are defined by $\nabla_j(A):=V_j r_j(A)-l_j(A)V_j$
 with gradient $\nabla A:=(\nabla_1 A,...,\nabla_{\vert J \vert} A)$ and divergence operator
 \[ \operatorname{div}(A) =- \sum_{j \in J} \nabla_j^* A_j\]
 where $ \nabla_j^*:=\nabla_{\bar j}$ with $\bar j$ such that $V_{\bar j}=V_j^*.$

 \subsubsection{Wasserstein distance}

 \smallsection{Logarithmic case} The quantum $L^2$-Wasserstein distance, for the above differentiable structure, has been defined in \cite[(9.1)]{CM20}, by
\[ W^2_Q(\rho_{\mathrm{in}},\rho_{\mathrm{fi}}) := \inf_{\rho \in \Pi(\rho_{\mathrm{in}},\rho_{\mathrm{fi}})}  \left\{ \int_0^1 \Vert \rho'(t) \Vert^2_{\rho(t)} dt \right\}. \]
Here, we use the norm $\Vert Z \Vert^2_{\rho}= \left\langle Z, L_{\rho}(Z) \right\rangle_{L^2(\tau)}.$
The quantum $L^2$-Wasserstein distance can then be expressed as a variational problem -in analogy to the classical Brenier-Benamou formula \eqref{eq:BBF} for the classical $L^2$-Wasserstein distance- by
\begin{equation}
\begin{split}
\label{eq:W2Q}
 W^2_Q(\rho_{\mathrm{in}},\rho_{\mathrm{fi}})&:=  \inf_{\rho \in \Pi(\rho_{\mathrm{in}},\rho_{\mathrm{fi}})}  \left\{ \int_0^1 \Vert \nabla \Phi(t) \Vert^2_{\rho(t)} \ dt \right\} 
  \end{split}
 \end{equation}
where $\Phi$ is coupled to $\rho$ by the following continuity equation
\[ \rho'(t) + \operatorname{div}\left( L_{ \rho(t)}(\nabla \Phi(t)) \right)=0.\]

The physical interpretation of the Riemannian metric $g_{\rho}$ is that for two faithful states $\rho, \sigma \in \mathscr D_{+}(\mathcal H)$, and the quantum relative entropy, defined by
 \[ S_{\sigma}(\rho) = \tau(\rho (\log(\rho)-\log(\sigma))), \]
 \cite[Prop. $2.7$]{CM20} shows that for $D$ denoting the derivative, the gradient $(\operatorname{grad}S_{\sigma})(\rho):=(-\Delta_{\rho}) DS_{\sigma}(\rho)$, where $DS_{\sigma}(\rho)= \log(\rho)-\log(\sigma)$, and we have
\[ (\mathscr L^* \rho)(\rho) = - (\operatorname{grad}S_{\sigma})(\rho).\]
This implies that the gradient flow of the entropy $S_{\sigma}$ with respect to the metric $g_{\rho}$ is the dynamics of the Liouville-von Neumann equation where $\sigma$ is the invariant state of the dynamics defined by $\mathscr L^*.$

\medskip

\smallsection{Anti-commutator case} When instead of using the the \emph{Feynman-Kubo-Mori} integral, but rather the anti-commutator 
\begin{equation}
\label{eq:anti-comm}
L^{\text{ac}}_{\rho}(T):=\frac{1}{2}\{T,\rho\}
\end{equation} one is lead to introduce a different $L^2$-Wasserstein distance \cite{CGT}
\[ \tilde{W}^2_Q(\rho_{\mathrm{in}},\rho_{\mathrm{fi}}) := \inf_{\rho \in \Pi(\rho_{\mathrm{in}},\rho_{\mathrm{fi}});}  \left\{ \int_0^1  \tr(\rho v(t)^*v(t)) dt \right\} \]
with $v^*v= \sum_{k=1}^N v_k^* v_k$, where $v$ and $\rho$ are coupled by 
\[ \rho'(t) +\operatorname{div}L^{\text{ac}}_{\rho}(v)=0:, \quad \rho(0)=\rho_{\mathrm{in}}, \rho(1)=\rho_{\mathrm{fi}}.\]

In particular, the operator $L^{\text{ac}}_{\rho}(T)$ is invertible for $\rho>0$ by standard results on the solvability of Lyapunov equations which imply that the inverse is explicitly given as
\[ (L^{\text{ac}}_{\rho})^{-1}(S) = -\int_0^{\infty} e^{-\rho s} S e^{-\rho s} \ ds.\]

 \subsection{Fermionic Fokker-Planck equation}
\label{subsec:FFP}
Due to its analogy to classical probability theory and classical gradient flows, we start by discussing the \emph{quantum fermionic Fokker-Planck equation}. Instead of just stating it within the abstract differential calculus introduced in the previous section, we will provide full details to fix ideas.

The quantum fermionic Fokker-Planck equation, is the canonical gradient flow associated with the quantum Wasserstein metric and corresponds to the classical Fokker-Planck equation\footnote{especially in statistical physics, the name \emph{Fokker-Planck equation} is usually reserved for another equation acting on phase-space variables and the equation considered here is called the (Kramers)-Smoluchowski equation.}
\[ \frac{\partial \rho(x,t)}{\partial t} = \operatorname{div}(\rho(t,x) \nabla V(x)) + \beta \Delta \rho(t,x), \quad \rho(0,x)=\rho_0(x) \text{ for } x \in \RR^d.\] 
Under suitable growth conditions on $V$ this equation has a unique invariant measure $d\mu(x) \propto e^{-\beta V(x)}dx.$
Carlen and Maas introduced in \cite{CM14} a Riemannian metric on density operators which extends the classical $L^2$-Wasserstein metric to the quantum setting and with respect to which the quantum evolution of the fermionic Fokker-Planck equation is a gradient flow. We will explain in Section how to use this metric to define a natural gradient flow for parametric models of density operators.

\subsubsection{Clifford algebra}
Let $\mathfrak C$ be the Clifford algebra on $\RR^n$ generated by $n$ self-adjoint operators $Q_j$, $j=1,..,n$ satisfying the canonical anti-commutation relations $\{Q_i,Q_j\}= 2\delta_{ij}$. The operators $Q_j$ are also called the \textit{fermionic degrees of freedom}. Moreover, $\mathfrak C$ becomes a $2^n$-dimensional Hilbert space $\mathcal H \sim L^2(\tau)$ with inner product $\langle A,B \rangle_{L^2(\tau)}:=\tau(A^*B),$ where we introduce the normalized trace $\tau(A)=2^{-n} \tr_{\CC^{2^n}}(A).$ 

The density operators $\mathscr D(\mathcal H)$ in this setting is the closed convex set of positive operators $\rho \in \mathfrak C$ of unit normalized trace.

We can explicitly construct matrices $Q_j$ solely from Pauli matrices 
\begin{equation}
\label{eq:Pauli}
 \sigma_x=\left(\begin{matrix} 0 & 1 \\ 1& 0 \end{matrix} \right), \sigma_y:= \left(\begin{matrix} 0 & -i \\ i& 0 \end{matrix} \right),\text{ and }\sigma_z=\left(\begin{matrix} 1 & 0 \\ 0 & -1 \end{matrix} \right).
 \end{equation}
 
One realization of the fermionic operators $Q_j$, is by defining them as $Q_j:=\otimes_{i=1}^n X_i$ where 
\begin{equation*}
X_i= \begin{cases} 
\sigma_z&\text{ for }i<j, \\
\sigma_x&\text{ for }i=j, \text{ and }\\
\operatorname{id}_{\CC^2} &\text{ for }i>j.
\end{cases}
\end{equation*}
The grading operator $\Gamma: \mathfrak C \rightarrow \mathfrak C$ is the linear operator defined, for $\alpha \in \{0,1\}^n$, by $\Gamma(Q^{\alpha}):=(-1)^{\vert \alpha \vert} Q^{\alpha}$ where $Q^{\alpha}:= \prod_{i=1}^n Q_i^{\alpha_i}.$ The index set $\alpha \in \{0,1\}^n$ is called the \emph{fermionic multi-index set}.
The $2^n$ matrices $Q^{\alpha}$ for $\alpha \in \{0,1\}^n$ form an orthonormal system spanning $\mathfrak C$ which satisfies $\tau(Q^{\alpha})= \delta_{0\vert \alpha \vert}.$

For two density operators $\rho_1,\rho_2 \in \mathscr D(\mathcal H)$ we define the \emph{Feynman-Kubo-Mori} operator $L_{(\rho_1,\rho_2)}: \mathcal L(\mathfrak C) \rightarrow \operatorname{TC}(\mathfrak C)$ by 
\begin{equation}
\label{eq:L}
 L_{(\rho_1,\rho_2)}(T):=  \int_0^1 \rho_1^{1-s} T \rho_2^s \ ds 
 \end{equation}
which is a contraction map into the set of trace-class operators, as H\"older's inequality shows
\[ \Vert L_{(\rho_1,\rho_2)}(T) \Vert_1 \le  \int_0^1 \Vert \rho_1^{1-s}T \rho_2^s \Vert_1  \ ds  \le \int_0^1 \Vert \rho_1^{1-s}\Vert_{(1-s)^{-1}} \Vert T \Vert_{\infty} \Vert \rho_2^s\Vert_{s^{-1}}  \ ds  \le \Vert T \Vert_{\infty}.\]
Under the stronger assumption $\rho_1,\rho_2 \in \mathscr D_+(\mathcal H),$ the operator $L_{(\rho_1,\rho_2)}$ becomes invertible and its inverse is given by \cite[Theo. $3.4$]{CM14}
\[ L_{(\rho_1,\rho_2)}^{-1}(T) = \int_0^{\infty} (\rho_1+t)^{-1} T (\rho_2+t)^{-1} \ dt. \]
In particular, we will just write $L_{\rho}:= L_{(\Gamma(\rho),\rho)}$ in the sequel.

The fermionic Dirichlet form on $\mathfrak C$ is defined by 
\[ \mathcal F(A,A) =\tau((\nabla A)^* \nabla A)= \sum_{j=1}^n \tau ( (\nabla_j A)^* \nabla_j A) \]
with derivatives 
\begin{equation}
\label{eq:derivatives}
\nabla_j(A) = \frac{1}{2} \left(Q_j A - \Gamma(A) Q_j\right) \in \mathfrak C,\text{ for }j \in \{1,..,n \}\text{ and }A \in \mathfrak C.
\end{equation}
The gradient $\nabla: \mathfrak C \rightarrow \mathfrak C^n$ is then defined as $\nabla(A) := \left( \nabla_1(A),...,\nabla_n(A)\right) \in \mathfrak C^n$ with nullspace $\operatorname{ker}(\nabla) = \operatorname{span}(\operatorname{id}).$ The $L^2(\tau)$-adjoint of derivatives $\nabla_j$ is just given by 
\[ \nabla_j^*(A) = \frac{1}{2} \left(Q_j A + \Gamma(A) Q_j\right).\]

The divergence operator is defined, for $A=(A_j)_j \in \mathfrak C^n$ by $\operatorname{div}(A) =-\sum_{j=1}^n\nabla_j^*(A_j)$. We define the fermionic number operator $\mathcal N$ as the self-adjoint operator associated to the Dirichlet form $\mathcal F(B,A) =: \langle B, \mathcal N A \rangle_{L^2(\tau)}$
where $\mathcal NA = -\operatorname{div}(\nabla A)$ for all $A \in \mathfrak C$ and $\operatorname{ker}(\mathcal N)=\operatorname{id}.$ The dynamical semigroup generated by $-\mathcal N$ is the \emph{quantum fermionic Fokker-Planck semigroup} defined by $P_t=e^{-t\mathcal N}$ which relaxes exponentially fast to its unique invariant state, the completely mixed state. In particular, $\mathcal N$ is the generator of an ergodic Quantum Markov semigroup satisfying the detailed balance condition with respect to the completely mixed state.

This model can be casted in the differential calculus introduced in Section \ref{subsec:DC} by setting $\mathcal A:=\mathcal B_j:=\mathfrak C^n$, $V_j:=Q_j$, $\omega_j:=0$, $l_j:=\Gamma$ and $r_j=\operatorname{id}$ with derivatives as defined in \eqref{eq:derivatives}
and a generator $\mathscr LA = 2 \sum_{j=1}^n (Q_jA Q_j-A)=-4\mathcal N.$

 \subsection{Quantum Markov semigroups with detailed balance condition}
 \label{sec:QMSDB}
 In the rest of this section, we illustrate the ideas using the differential calculus in Subsection \ref{subsec:DC} in the case of Quantum Markov semigroups $(P_t)$ with Lindblad generator $\mathscr L$, in the Heisenberg picture, acting on a finite-dimensional $C^*$-algebra $\mathcal A$ satisfying the detailed balance condition (DBC). This means, that for all times $t>0$ the operator $P_t$ is self-adjoint with respect to the inner product $\langle X, Y\rangle_{1,\sigma}:=\tau(X^* \sigma Y)$ for some state $\sigma$. In particular, the DBC implies that $\sigma$ is the unique state such that $P_t^*(\sigma)=\sigma$ for all times $t>0.$ 
Other possible applications of the differential calculus in Subsection \ref{subsec:DC} and thus also of the parameter estimation techniques studied in this paper are discussed in \cite[Sec. $5$]{CM20} and include popular quantum channels such as the depolarizing channel.

The generators $\mathscr L$ of the quantum Markov semigroups in Heisenberg representation are characterized by \cite[Theo $2.4$]{CM20}
\begin{equation}
\label{eq:generators}
 \mathscr L = \sum_{j \in J} e^{-\omega_j/2} \mathscr L_j \text{ and } \mathscr L_j (A) = V_j^* [A,V_j]+ [V_j^*,A]V_j
 \end{equation}
 with $J$ a finite set and a family of operators $(V_j)_{j \in J}$ closed under taking adjoints, as well as real numbers $\omega_j$ such that the modulation operator $M_{\sigma}(A):=M_{\sigma,\sigma}(A):=\sigma A \sigma^{-1}$ satisfies 
 \[ M_{\sigma}(V_j) = e^{-\omega_j } V_j \text{ and }\omega_{\bar j} =-\omega_j. \] 
 We then define $\mathcal A= B_j= \mathcal L(\mathcal H)$ where $\mathcal H$ is a finite-dimensional Hilbert space, write $\mathcal B:=\prod_j B_j$, and set $l_j=r_j=\operatorname{id}_{\mathcal A}.$ The partial derivatives are then just given by $\nabla_j A = [V_j,A]$ and $\nabla_j^*:=\nabla_{\bar j}$ where $\bar j$ is such that $V_j^*=V_{\bar j}.$ The gradient vector is thus just $\nabla = (\nabla_1,...,\nabla_{\vert J \vert})$.  It follows from \cite[Prop. $2.5$]{CM20} that the Lindblad generator induces a Dirichlet form with respect to the Kubo-Martin-Schwinger inner product $\langle A,B \rangle_{\mathrm{KMS}}:=\tau(X^* Y \sigma),$ i.e.
 \[ \langle \nabla A, \nabla B \rangle = - \langle A, \mathscr L B \rangle_{L^2_{\mathrm{KMS}}(\sigma)} \text{ for all } A,B \in \mathcal A. \]
 
 We then define the operator
 \[ \widehat{\rho_j} = \int_0^1 (e^{\omega_j/2}l(\rho))^{1-s} \otimes  (e^{-\omega_j/2}r(\rho))^{s} \ ds \in \mathcal A \otimes \mathcal A \]
 with inverse operator
  \[ \widecheck{\rho_j} = \int_0^{\infty} (t+e^{\omega_j/2}l(\rho))^{-1} \otimes   (t+e^{-\omega_j/2}r(\rho))^{-1} \ dt. \]
In terms of a contraction operator $\#$ that is uniquely defined as the linear extension of the map $(A \otimes B)\#C:=ACB$ for $A,B,C \in \mathcal A$ and \emph{Feynman-Kubo-Mori} operator
\begin{equation}
\label{eq:L2}
L_{\rho}(C) := \widehat{\rho_j} \# C
\end{equation} we may then introduce a positive-definite operator $-\Delta_{\rho}$ on $L^2(\mathcal A, \tau)$
\begin{equation}
\label{eq:D2}
 -\Delta_{\rho}(A) :=\sum_{j \in J} \nabla_j^*(L_{\rho}(\nabla_j A)).
 \end{equation}
This way, the $L^2$-quantum Wasserstein metric becomes
 \begin{equation}
\begin{split}
 W^2_Q(\rho_{\mathrm{in}},\rho_{\mathrm{fi}})&:=  \inf_{\rho \in \Pi(\rho_{\mathrm{in}},\rho_{\mathrm{fi}})}  \left\{ \int_0^1 \langle \Phi(t),-\Delta_{\rho} (\Phi(t)) \rangle_{\tau} \ dt \right\} 
  \end{split}
 \end{equation}
where $\Phi$ is coupled to $\rho$ by the following continuity equation
\[ \rho'(t) = \Delta_{\rho(t)} \Phi(t).\]

 \section{Quantum natural gradient and open quantum systems}
 \label{sec:DBC}
In the following we shall impose the following condition on generators of finite-dimensional open quantum systems we consider:
 \begin{ass}
We assume that $\mathscr L$ is ergodic, i.e. $\operatorname{ker}(\mathscr L)=\operatorname{span}\{ \operatorname{id}\}$ satisfying the detailed balance condition with invariant state $\sigma.$  \end{ass} 

 \subsection{Gradient flow for finite-dimensional OQSs with DBC}
 \label{eq:GFFF}
By the ergodicity assumption, we are able to pull back the metric from the state space to the parameter space. In particular, the above assumptions are satisfied for the fermionic Fokker-Planck equation with the completely mixed state as the unique invariant state.

The statistical parameter space is as in the classical setting defined as a $d$-dimensional Riemannian manifold $\Theta$ with connection $D_{\theta}$ and metric tensor $\langle \xi, \eta \rangle_{\theta} = \xi^T G_{\theta} \eta$. We then take a continuous parametrization $\Theta \ni \theta \mapsto \rho(\theta) \in \mathscr D_{+}(\mathcal A)$ of density operators. 
 
We then define a norm 
\[\Vert Z \Vert^2_{\rho}= \left\langle Z, L_{\rho}(Z) \right\rangle_{L^2(\tau)}\]
where $L_{\rho}$ has been defined in \eqref{eq:L} for the fermionic Fokker-Planck equation and in \eqref{eq:L2} for general open quantum systems satisfying the DBC. In addition, we allow for $L_{\rho}$ the anti-commutation operator defined in \eqref{eq:anti-comm}.

The associated metric tensor on $\mathscr D_+(\mathcal H)$ is given by
\[ g_{\rho}: (T_{\rho}\mathscr D_+(\mathcal H))^2 \rightarrow \RR, \quad g_{\rho}(X,X):= \left\langle \nabla \Phi_X, L_{\rho}(\nabla \Phi_X) \right\rangle_{L^2(\tau)} \]
where $T_{\rho}\mathscr D_+(\mathcal H)$ is the tangent space at $\rho$ and $\nabla \Phi_X$ is the unique gradient field  cf. \cite[Theo $3.17$]{CM14} and \cite[Lem. $7.5$]{CM20} satisfying 
\[ X = - \operatorname{div} \left( L_{\rho}(\nabla \Phi_X) \right).\]
In case of $L_{\rho}$ being the anti-commutator, the gradient field $\nabla \Phi_X$ can be found by solving the Lyapunov equation \cite[(21)]{CGT}
\[\nabla ( \operatorname{div}\operatorname{grad}\vert_{\operatorname{span}(\operatorname{id})^{\perp}} )^{-1} X = L_{\rho}(\nabla \Phi_X) \in \mathcal L(\mathcal H^n).\]

In particular, there exists a unique gradient $\nabla \Phi_{\xi}$ such that 
\[ \langle D_{\theta}\rho(\theta),\xi \rangle_{\theta} =-  \operatorname{div}(L_{\rho(\theta)} (\nabla \Phi_{\xi})).\]

Hence, we conclude that for $\xi,\eta \in T_{\theta}\Theta$ there are \emph{score functions} $\Phi_{\xi}$ and $\Phi_{\eta}$\footnote{Score functions are only defined up to elements in $\operatorname{ker}(\nabla)$} such that we can define the pullback metric on the parameter space
\begin{equation}
\begin{split}
\label{eq:MT2}
 g_{\theta}(\xi,\eta)&:=g_{\rho(\theta)}(\langle D_{\theta}\rho(\theta),\xi \rangle_{\theta},\langle D_{\theta}\rho(\theta),\eta \rangle_{\theta})\\
 &=  \left\langle \nabla \Phi_{\xi}, L_{\rho(\theta)} (\nabla \Phi_{\eta}) \right\rangle_{L^2(\tau)}\\
 &=-\left\langle  \Phi_{\xi}, \operatorname{div}(L_{\rho(\theta)} (\nabla \Phi_{\eta})) \right\rangle_{L^2(\tau)} \\
  &=\left\langle \Phi_{\xi}, \langle D_{\theta} \rho(\theta),\eta \rangle_{\theta} \right\rangle_{L^2(\tau)}.
 \end{split}
\end{equation}

We then define the operator $-\Delta_{\theta}f:=-\operatorname{div}(L_{\rho(\theta)}(\nabla f)).$ This operator is self-adjoint with respect to $\langle \bullet, \bullet \rangle_{L^2(\tau)}$ and positive-definite with only $\operatorname{span}\{\operatorname{id}\}$ in its nullspace by ergodicity. Using that $\langle D_{\theta} \rho(\theta),{\eta} \rangle_{\theta} \in \operatorname{ker}(\Delta_{\theta})^{\perp}$, this implies that
\begin{equation}
\begin{split}
 g_{\theta}({\xi},{\eta})&:= \tau \left(  \langle D_{\theta} \rho(\theta),{\xi} \rangle_{\theta}(- \Delta_{\theta})\vert_{\operatorname{span}\{\operatorname{id}\}^{\perp}}^{-1} \langle D_{\theta} \rho(\theta),{\eta} \rangle_{\theta} \right).
 \end{split}
 \end{equation}

We can rewrite this line as a bilinear form by using the matrix $G_{\theta}$, introduced above, associated with the metric on the parameter space. We can thus define the positive definite \emph{Wasserstein information matrix}
\begin{equation}
\label{eq:GW}
G_W(\theta)=\tau\left(\widehat{e_i}^T G_{\theta}^T D_{\theta}\rho(\theta)(-\Delta_{\theta})\vert_{\operatorname{span}\{\operatorname{id}\}}^{-1}  D_{\theta} \rho(\theta)G_{\theta} \widehat{e_j} \right)_{i,j} \in \RR^{d \times d}.
\end{equation}
 Thus, it follows that the metric tensor of the pullback metric on the statistical manifold is of the simple form
 \begin{equation}
 \label{eq:MT3}
 g_{\theta}(\xi,\eta)= \langle \xi, G_W(\theta) \eta \rangle
 \end{equation}
and as in Section \ref{sec:NGF} the natural Wasserstein gradient becomes for an objective function $R(\theta)$ defined by 
\begin{equation}
\label{eq:WGN}
 \dot{\theta}(t) = -\nabla_{g} R(\theta(t))
 \end{equation}
with $\nabla_g$ uniquely defined by
\[ g_{\theta}(\nabla_g R(\theta), \xi ) = \langle D_{\theta} R(\theta), \xi \rangle_{\theta} \quad \forall \xi \in T_{\theta}\Theta\]
such that $\nabla_g R(\theta) = G_W(\theta)^{-1} G_{\theta} D_{\theta} R(\theta).$
This is illustrated in Section \ref{sec:WNG} for $R$ being the von Neumann entropy.

The gradient descent method in parameter space naturally corresponds to a gradient descent method on the parametrized manifold of states:

\begin{prop}
Consider an immersion $\Theta \ni \theta \mapsto \rho(\theta) \in \mathscr D(\mathcal H)$ and an objective function $\mathcal R$ on the set of states. We can then define an objective function $R(\theta)=\mathcal R (\rho(\theta))$ and the gradient evolution
\[ \dot{\theta}(t) = - \nabla_g R(\theta),\]
induces the gradient evolution 
\[ \rho'(t) = - \operatorname{grad} \mathcal R(\rho(t))\]
on the parametrized manifold of states where $\rho(t)=\rho(\theta(t))$ and $\operatorname{grad}(\mathcal R(\rho(t_0)))=\langle D_{\theta}\rho(\theta),\nabla_g R(\theta_{t_0}) \rangle_{\theta}.$
\end{prop}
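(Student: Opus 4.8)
The plan is to read the statement as the compatibility of a pulled-back metric with the chain rule: the natural gradient flow \eqref{eq:WGN} in parameter space is nothing but the Riemannian gradient flow, on the immersed submanifold $\rho(\Theta) \subset \mathscr D_+(\mathcal H)$ carrying the metric induced from $g_\rho$, of the restriction of $\mathcal R$. To make this precise I would first introduce the pushforward $d\rho_\theta : T_\theta\Theta \to T_{\rho(\theta)}\mathscr D_+(\mathcal H)$ defined by $d\rho_\theta(\xi) := \langle D_\theta\rho(\theta),\xi\rangle_\theta$, which is exactly the object appearing inside the pullback metric \eqref{eq:MT2}, so that $g_\theta(\xi,\eta)=g_{\rho(\theta)}(d\rho_\theta\xi,\,d\rho_\theta\eta)$. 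Because $\theta\mapsto\rho(\theta)$ is an immersion, $d\rho_\theta$ is injective; hence its image is a genuine $d$-dimensional tangent subspace $T_{\rho(\theta)}(\rho(\Theta))$ and $g_\theta$ is positive definite. This is precisely what guarantees that $G_W(\theta)$ in \eqref{eq:GW} is invertible and that $\nabla_g R(\theta)$ is well defined.

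The heart of the argument is a single computation. Applying the chain rule to $R=\mathcal R\circ\rho$ yields $\langle D_\theta R(\theta),\xi\rangle_\theta = D\mathcal R(\rho(\theta))[d\rho_\theta\xi]$ for every $\xi\in T_\theta\Theta$. Combining this with the defining relation $g_\theta(\nabla_g R(\theta),\xi)=\langle D_\theta R(\theta),\xi\rangle_\theta$ and the pullback identity gives, for all $\xi$,
\[ g_{\rho(\theta)}\!\left(d\rho_\theta(\nabla_g R(\theta)),\, d\rho_\theta\xi\right) = g_\theta(\nabla_g R(\theta),\xi) = D\mathcal R(\rho(\theta))[d\rho_\theta\xi]. \]
Since $Y=d\rho_\theta\xi$ ranges over all of $T_{\rho(\theta)}(\rho(\Theta))$, this says exactly that the vector $d\rho_\theta(\nabla_g R(\theta))$ satisfies the defining property of the Riemannian gradient of $\mathcal R|_{\rho(\Theta)}$ with respect to the induced metric; that is, it equals the quantity the statement calls $\operatorname{grad}\mathcal R(\rho(\theta))=\langle D_\theta\rho(\theta),\nabla_g R(\theta)\rangle_\theta$.

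The conclusion then follows by differentiating the curve $\rho(t)=\rho(\theta(t))$: the chain rule together with $\dot\theta(t)=-\nabla_g R(\theta(t))$ gives
\[ \rho'(t)=d\rho_{\theta(t)}(\dot\theta(t))=-d\rho_{\theta(t)}(\nabla_g R(\theta(t)))=-\operatorname{grad}\mathcal R(\rho(t)), \]
which is the asserted identity.

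The step I expect to require the most care is the identification in the middle paragraph: one must keep in mind that $\operatorname{grad}\mathcal R$ here denotes the gradient of the \emph{restriction} of $\mathcal R$ to the immersed manifold $\rho(\Theta)$, lying in $T_{\rho(\theta)}(\rho(\Theta))$, rather than the ambient gradient on all of $\mathscr D_+(\mathcal H)$. The ambient gradient would in general differ by a component normal to $\rho(\Theta)$, and the equality $g_{\rho(\theta)}(\operatorname{grad}_{\mathrm{amb}}\mathcal R - d\rho_\theta(\nabla_g R),\,\cdot)=0$ only holds when tested against tangent directions $d\rho_\theta\xi$. The two notions therefore coincide precisely because the natural gradient construction, via $G_W(\theta)^{-1}$, already performs the projection onto the immersed tangent space; making this distinction explicit is the one place where the immersion hypothesis and the pullback construction must be used together rather than separately.
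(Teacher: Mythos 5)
Your proof is correct and takes essentially the same route as the paper's: both establish $d\rho_\theta(\nabla_g R(\theta)) = \operatorname{grad}\mathcal R(\rho(\theta))$ by computing the derivative of $R$ in an arbitrary tangent direction in two ways --- once via the pullback metric and the defining property of the natural gradient, once via the chain rule and the defining property of $\operatorname{grad}\mathcal R$ --- and then differentiate $\rho(\theta(t))$ along the flow. Your closing observation that $\operatorname{grad}\mathcal R$ must be read as the gradient of the \emph{restriction} of $\mathcal R$ to the immersed submanifold (since the identity is only tested against directions of the form $d\rho_\theta\xi$) is a point the paper's proof leaves implicit, but it is a clarification rather than a different argument.
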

\begin{proof}
We always have that $\frac{d}{dt}\rho(\theta(t))=(\rho_{\theta})_{*} \dot{\theta}(t)= - (\rho_{\theta})_{*}\nabla_g R(\theta(t)).$ Thus it suffices to show that $ (\rho_{\theta})_{*}\nabla_g R(\theta(t))=  \operatorname{grad} \mathcal R(\rho(t)).$

Fix a curve $(\vartheta_{\tau})_{\tau}$ passing through $\theta_{t_0}$ at $\tau=0$, then it follows that 
\begin{equation}
\begin{split}
\frac{d}{d\tau} \big\vert_{\tau=0} R(\vartheta_{\tau})&= g_{\theta_{t_0}}(\nabla_g R(\theta_{t_0}),\dot{\vartheta}_{0})\\
&=g_{\rho({t_0})}(\langle D_{\theta}\rho(\theta_{t_0}),\nabla_g R(\theta_{t_0}) \rangle_{\theta}, \langle D_{\theta}\rho(\theta_{t_0}),\dot{\theta}_{t_0}) \rangle_{\theta}).
 \end{split}
 \end{equation}
On the other hand, we also see that
\begin{equation}
\begin{split} \frac{d}{d\tau} \big\vert_{\tau=0} R(\vartheta_{\tau})&=\frac{d}{d\tau} \big\vert_{\tau=0} \mathcal R(\rho(\tau))= g_{\rho({t_0})}(\operatorname{grad}(\mathcal R (\rho(t_0))),\dot{\rho}(\theta(0))) \\
&=g_{\rho({t_0})}(\operatorname{grad}(\mathcal R (\rho(t_0))),\langle D_{\theta}\rho({t_0}),\dot{\theta}(0) \rangle_{\theta}).
 \end{split}
 \end{equation}
This shows that $\langle D_{\theta}\rho(\theta),\nabla_g R(\theta_{t_0}) \rangle_{\theta}=\operatorname{grad}(\mathcal R(\rho({t_0}))).$
\end{proof}

Using \eqref{eq:MT2} and \eqref{eq:MT3}, we thus find that the geodesics on the parameter manifold $(\Theta, g_{\theta})$ minimize again the square geodesic distance
 \begin{equation}
W^2_Q(\rho(\bullet,\theta^0),\rho(\bullet,\theta^1)) = \inf_{\substack{\theta \in C^1(0,1)\cap C[0,1]\\ \theta(0)=\theta^0 , \theta(1)=\theta^1}}\left\{ \int_0^1 \langle \dot{\theta}(t), G_W(\theta(t)) \dot{\theta}(t)\rangle \ dt \right\}.
 \end{equation}
The geodesics to the above Wasserstein distance are given as solutions to the following Hamiltonian system
 \begin{equation}
 \begin{split}
 \label{eq:low-dim}
&\dot{\theta}-G_W(\theta)^{-1} P = 0 \text{ and }\\
&\dot{P}+ \frac{1}{2} P^T \partial_{\theta} G_W(\theta)^{-1} P=0. 
 \end{split}
 \end{equation}
Indeed, for the Lagrangian $\mathcal L(\theta(t), \dot{\theta}(t)) =\langle \dot{\theta}(t),G_W(\theta(t))\dot{\theta}(t) \rangle,$ the associated momentum variable is $P(t)=G_W(\theta(t))\dot{\theta}(t)$ with Hamilton function $H(P(t),\theta(t)) = \frac{1}{2} \langle P(t), G_W(\theta(t)) P(t)\rangle$. The system \eqref{eq:low-dim} are then precisely Hamilton's equations.

 On the other hand, the geodesic equations in $\mathscr D_+(\mathcal H)$ with respect to the quantum Wasserstein metric for the fermionic Fokker-Planck equation are given by \cite[Theo. $5.3$]{CM14}
 \begin{equation}
 \begin{split}
 \label{eq:high-dim}
 &\rho'(t) + \operatorname{div}(L_{\rho(t)} \nabla \Phi(t) ) = 0 \\
 &\Phi'(t) + \frac{1}{2} \rho(t) \flat (\nabla \Phi(t), \nabla \Phi(t)) = 0
 \end{split}
 \end{equation}
 where we define for $\rho \in \mathscr D_+(\mathcal H)$ and $X,Y \in \mathfrak C^n$ the map 
 \begin{equation}
 \begin{split}
  \rho \flat (X,Y)  =  \int_0^1 \int_0^1 \int_0^{\alpha} \frac{2\rho^{\alpha-\beta}}{(1-s)+s\rho} X^* \Gamma(\rho)^{1-\alpha} Y \frac{\rho^{\beta}}{(1-s)+s\rho} \ d\beta \ d\alpha \ ds.
 \end{split}
 \end{equation}
 The advantage of \eqref{eq:low-dim} over \eqref{eq:high-dim} lies in the low-dimensionality of the parameter space which turns \eqref{eq:low-dim} into an equation in a much lower dimensional space than the system in \eqref{eq:high-dim}, in general.  
 
 \subsection{Schr\"odinger bridge problem for finite-dimensional OQSs with DBC}
 \label{sec:QSBP}
We may now introduce a generalization of the quantum Brenier-Benamou formula in \eqref{eq:W2Q}, to study a quantum version of the Schr\"odinger bridge problem, by adding a Fisher information regularizer to the dynamics. For this derivation, we shall restrict us to the scenario that the operator $L_{\rho}$ is the \emph{Feynman-Kubo-Mori} operator as in this case, one obtains direct links to quantum entropies and quantum dynamics. 

The computational advantage of the Fisher information regularization are two-fold. Firstly, it induces additional convexity to the minimization problem. Secondly, it additionally forces the density operator to remain strictly positive. 
 \begin{defi}
The quantum Schr\"odinger bridge problem (SBP) is the minimization problem for two quantum states $\rho_{\mathrm{in}},\rho_{\mathrm{fi}} \in \mathscr D_+(\mathcal H)$ 
\begin{equation}
\label{eq:SBP}
\mathcal S(\rho_{\mathrm{in}},\rho_{\mathrm{fi}}):=\inf_{\rho \in \Pi(\rho_{\mathrm{in}},\rho_{\mathrm{fi}}) } \inf_m \int_0^1  \Vert m \Vert^2_{\rho(t)^{-1}} \ dt.
\end{equation}
 where we use the inner product
 \[ \langle X, Y \rangle_{\rho(t)^{-1}}:= \langle X, L_{\rho}^{-1}(Y) \rangle_{L^2(\tau)}. \]
  Here $m$ is connected to $\rho(t)$ by an inhomogeneous heat equation 
 \[ \rho'(t) + \operatorname{div}(m(t)) = \beta T \rho(t) \]
 for some fixed parameter $\beta\ge 0$  where $T=\mathscr L^*$ for OQS satisfying the DBC and $T=-\mathcal N$ in the case of the fermionic Fokker-Planck equation.
 \end{defi}
As for \eqref{eq:SBP}, in the case $\beta=0$, the SBP reduces to the minimization of the quantum $L^2$-Wasserstein metric in \eqref{eq:W2Q}.
 We now introduce the \emph{Fisher information matrix}  $I(\rho):=  \Vert \nabla (\log(\rho)-\log(\sigma))\Vert^2_{\rho}.$ We can then express the optimal transport distance problem \eqref{eq:SBP} as an equivalent dynamical problem with Fisher information regularization:
 \begin{theo}
 The Schr\"odinger bridge problem \eqref{eq:SBP} is equivalent to the following optimization problem 
 \begin{equation*}
 \mathcal S (\rho_{\mathrm{in}},\rho_{\mathrm{fi}}):=\inf_{\rho \in \Pi(\rho_{\mathrm{in}},\rho_{\mathrm{fi}}) } \inf_M \int_0^1  \Vert M(t) \Vert^2_{\rho(t)^{-1}} + \beta^2 I(\rho(t)) \ dt + 2\beta (S_{\sigma}(\rho_{\mathrm{fi}})-S_{\sigma}(\rho_{\mathrm{in}}))
 \end{equation*}
 where $M$ satisfies the transport equation
 \begin{equation}
\label{eq:transport2}
\rho'(t) + \operatorname{div} M(t) = 0.
\end{equation}
 \end{theo}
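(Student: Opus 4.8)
The plan is to mimic, in the non-commutative setting, the classical completion-of-squares argument that turns \eqref{eq:functional} into \eqref{eq:SBPro}. The starting point is the identification of the drift as an entropy gradient flow, recorded after \eqref{eq:W2Q}: by \cite[Prop. 2.7]{CM20} one has
\[ T\rho = \mathscr L^*\rho = -(\operatorname{grad}S_{\sigma})(\rho) = \Delta_{\rho}\big(\log(\rho)-\log(\sigma)\big) = \operatorname{div}\big(L_{\rho}(\nabla(\log(\rho)-\log(\sigma)))\big), \]
and analogously $-\mathcal N\rho = \operatorname{div}(L_{\rho}\nabla\log(\rho))$ in the fermionic case, where $\sigma$ is the completely mixed state so that $\nabla\log(\sigma)=0$. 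Writing $\Psi(t):=\log(\rho(t))-\log(\sigma)=DS_\sigma(\rho(t))$ for the score of the relative entropy, I would therefore substitute
\[ m(t) = M(t) + \beta\, L_{\rho(t)}(\nabla\Psi(t)). \]
Inserting this into the inhomogeneous heat equation $\rho'+\operatorname{div}(m)=\beta T\rho$ and using the identity above, the drift cancels and $M$ is seen to satisfy precisely the transport equation \eqref{eq:transport2}, $\rho'+\operatorname{div} M=0$. Since $\rho(t)$ is held fixed under this reparametrisation, the map $m\mapsto M$ is an affine bijection between the admissible momenta for the two constraints, so the two infima agree once the integrands are matched.

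Next I would expand the cost under this substitution. Using that $L_{\rho}$ is self-adjoint and positive on $L^2(\tau)$, so that $L_{\rho}^{-1}(M+\beta L_{\rho}(\nabla\Psi))=L_{\rho}^{-1}(M)+\beta\nabla\Psi$, the integrand splits as
\[ \Vert m\Vert^2_{\rho^{-1}} = \langle M, L_{\rho}^{-1}(M)\rangle_{L^2(\tau)} + 2\beta\,\langle M, \nabla\Psi\rangle_{L^2(\tau)} + \beta^2\langle \nabla\Psi, L_{\rho}(\nabla\Psi)\rangle_{L^2(\tau)}. \]
The first term is $\Vert M\Vert^2_{\rho^{-1}}$ and the third is exactly $\beta^2 I(\rho)$, by the definition $I(\rho)=\Vert\nabla\Psi\Vert^2_{\rho}=\langle\nabla\Psi, L_{\rho}(\nabla\Psi)\rangle_{L^2(\tau)}$; the cross term is the only one left to interpret.

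The crux is to identify $2\beta\int_0^1\langle M,\nabla\Psi\rangle_{L^2(\tau)}\,dt$ with the boundary term $2\beta(S_{\sigma}(\rho_{\mathrm{fi}})-S_{\sigma}(\rho_{\mathrm{in}}))$. Integrating by parts via the adjointness of $\nabla$ and $\operatorname{div}$, and then using $\operatorname{div} M=-\rho'$, gives $\langle M,\nabla\Psi\rangle_{L^2(\tau)}=-\langle\Psi,\operatorname{div} M\rangle_{L^2(\tau)}=\langle\log(\rho)-\log(\sigma),\rho'\rangle_{L^2(\tau)}=\tau((\log(\rho)-\log(\sigma))\rho')$. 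I would then show this equals the total derivative $\tfrac{d}{dt}S_{\sigma}(\rho(t))$. Differentiating $S_{\sigma}(\rho)=\tau(\rho(\log(\rho)-\log(\sigma)))$ produces the extra term $\tau(\rho\,\tfrac{d}{dt}\log(\rho))$, which I would kill using the non-commutative chain rule $\tfrac{d}{dt}\log(\rho)=\int_0^{\infty}(\rho+s)^{-1}\rho'(\rho+s)^{-1}\,ds$: cyclicity of $\tau$ and commutation of $\rho$ with $(\rho+s)^{-1}$ reduce $\tau(\rho\,\tfrac{d}{dt}\log(\rho))$ to $\tau(\rho'\int_0^{\infty}\rho(\rho+s)^{-2}\,ds)=\tau(\rho')=0$, the last equality holding since $\tau(\rho(t))\equiv 1$. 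Hence $\tfrac{d}{dt}S_{\sigma}(\rho(t))=\tau((\log(\rho)-\log(\sigma))\rho')$, and the fundamental theorem of calculus yields the boundary term. Collecting the three pieces and pulling the endpoint-dependent, path-independent boundary term out of the double infimum completes the equivalence.

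The main obstacle is the non-commutative bookkeeping in this last step: the identity $\tau(\rho\,\tfrac{d}{dt}\log(\rho))=\tau(\rho')$ relies essentially on the integral representation of the logarithm together with trace-preservation, and one must ensure $\rho(t)$ stays faithful along the whole path so that $\log(\rho)$ and $L_{\rho}^{-1}$ remain well defined — which is precisely what the Fisher-information regulariser $\beta^2 I(\rho)$ enforces. A secondary point to verify carefully is the drift identity $T\rho=\operatorname{div}(L_{\rho}\nabla DS_\sigma(\rho))$ in both the detailed-balance and fermionic cases, together with the self-adjointness of $L_{\rho}$ used to redistribute the factor across the inner product; both are available from \cite{CM14,CM20}.
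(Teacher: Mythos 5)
Your proposal is correct and follows essentially the same route as the paper's proof: the same substitution $M = m - \beta L_{\rho}(\nabla(\log\rho - \log\sigma))$, the same drift identity from \cite[Prop.~2.7]{CM20} (and \cite[Lemma~3.1]{CM14} in the fermionic case) to reduce to the transport equation, the same completion-of-squares expansion, and the same integration by parts identifying the cross term with $S_{\sigma}(\rho_{\mathrm{fi}})-S_{\sigma}(\rho_{\mathrm{in}})$. Your only departure is that you justify the step $\tau(\rho'(\log\rho-\log\sigma)) = \tfrac{d}{dt}S_{\sigma}(\rho)$ explicitly via the integral representation of $\tfrac{d}{dt}\log\rho$ and trace preservation, a detail the paper leaves implicit.
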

 \begin{proof}
We start by defining
 \[ M(t):=m(t)-\beta  L_{\rho(t)}(\nabla( \log(\rho(t))-\log(\sigma))) ,\]
which turns the inhomogeneous heat equation into a simple transport equation
\begin{equation}
\rho'(t) + \operatorname{div} M(t) = 0
\end{equation}
as 
\[\mathscr L^* =\operatorname{div}(L_{\rho(t)}(\nabla( \log(\rho(t))-\log(\sigma)))),\] cf. the proof of \cite[Prop. $2.7$]{CM20}. In case of the fermionic Fokker-Planck equation we also record that the quantum analog of the classical identity $\nabla f(x) = f(x) \nabla \log f(x)$ in the quantum setting becomes the identity \cite[Lemma $3.1$]{CM14}
 \[ \nabla_i \rho = L_{\rho}(\nabla_i \log(\rho)).\]
Thus, we have that
 \begin{equation}
 \begin{split}
 \label{eq:reduction2}
\Vert m(t) \Vert^2_{\rho(t)^{-1}}&= \left\lVert M(t)+\beta L_{\rho(t)}(\nabla( \log(\rho(t))-\log(\sigma))) \right\rVert^2_{\rho(t)^{-1}} \\
&=  \left\lVert M(t) \right\rVert^2_{\rho(t)^{-1}} + 2 \beta \langle M(t), \nabla (\log(\rho(t))-\log(\sigma)) \rangle_{L^2(\tau)} \\
&\qquad + \beta^2 \Vert \nabla (\log(\rho(t))-\log(\sigma))\Vert^2_{\rho(t)}.
\end{split}
 \end{equation}
 
The middle term in \eqref{eq:reduction2} is constant, and satisfies in terms of the relative von Neumann entropy $S_{\sigma}(\rho) = \tau (\rho (\log(\rho)-\log(\sigma)))$ 
  \begin{equation*}
 \begin{split}
\int_0^1 \langle M(t), \nabla (\log(\rho(t))-\log(\sigma)) \rangle_{L^2(\tau)} \ dt 
&= -\int_0^1 \tau(\operatorname{div}(M(t)) (\log(\rho(t))-\log(\sigma)) ) \ dt\\
&=\tau\left( \int_0^1 \rho'(t)  (\log(\rho(t))-\log(\sigma)) \ dt \right)\\
&=S_{\sigma}(\rho_{\mathrm{fi}})-S_{\sigma}(\rho_{\mathrm{in}})= \operatorname{const}
\end{split}
 \end{equation*}
where we integrated by parts to obtain the last line. 
\end{proof}

 \subsection{Continuous-variable systems}
 \label{sec:CVS}
As in the theory of classical probability theory, there exists a close analogue of quantum \emph{Gaussian states} $\mathcal G(\mathcal H_m)$ on $\mathcal H_m:=L^2(\RR^m)$ defined as follows (cf. \cite{BDLR19} and references therein for more details):

Gaussian states are states $\rho \in \mathscr D(\mathcal H_m)$ such that their characteristic function $\chi_{\rho}: \CC^m \rightarrow \CC$
\begin{equation}
\label{eq:charfunction}
\chi_{\rho}(z):= \tr(\rho D(z)) 
\end{equation}
is the characteristic function of a Gaussian random variable over $\CC^m$, i.e. $\chi (\xi)=\operatorname{exp}\left(-\frac{1}{4} \langle \xi,\gamma \xi \rangle+i \langle d, \xi \rangle \right)$ where $\gamma>0$ is a positive definite matrix satisfying $\gamma+i\nu \ge 0$, for $\nu:=\begin{pmatrix} 0 & 1 \\ -1 & 0 \end{pmatrix}^{\oplus_{i=1}^m}$, and $d \in \RR^{2m}.$ Here, $D(z)$ is the displacement operator
\[ D(z) = \operatorname{exp}\left(\sum_{j=1}^m (z_j a_j^* - \bar z_j a_j)\right).\]
Conversely, the density operator $\rho \in \mathscr D(\mathcal H_m)$ can be recovered from its characteristic function by 
\[ \rho = \int_{\CC^m} \chi_{\rho}(z)D(-z) \frac{dz}{\pi^m}.\]
We can associate a canonical random variable to any Gaussian state in terms of their Wigner function 
\begin{equation}
\label{eq:Wigner}
 P_{\rho}(z):=\int_{\CC^m} \chi_{\rho}(w) e^{z^T w^*-z^{\dagger} w } \frac{dw}{\pi^{2m}}\ge 0
 \end{equation}
which is of unit $L^1$ norm and a Gaussian distribution on $\RR^{2m}$ as well.

A particularly simple and relevant example of a Gaussian state are thermal states with mean photon number $N \in [0,\infty)$ 
\[ \rho_N:= \frac{1}{N+1}\sum_{n=0}^{\infty} \left(\frac{N}{N+1} \right)^n \vert n \rangle \langle n \vert \]
as their characteristic functions and Wigner distributions 
\begin{equation}
\begin{split}
\chi_{\rho_N}(z):=e^{-(2N+1) \vert z \vert^2/2} \text{ and }P_{\rho_N}(z):=\frac{2}{\pi(2N+1)} e^{-\frac{2}{2N+1} \vert z \vert^2}.
\end{split}
\end{equation}
are centered and uncorrelated.

Thermal states have the special property that they are the maximum entropy states for a fixed average energy
\[ \rho_N = \operatorname{argmax}_{\rho ; \tr(\rho a^*a)\le N} -\tr(\rho \log(\rho)).\]

We finally mention that although Wigner distributions functions are positive as operators on $L^2(\RR^{2m})$, they are not pointwise positive in general and therefore also not always genuine probability distributions (cf.the Wigner distribution function associated to $\vert 1 \rangle \langle 1 \vert$).

In addition, the Wigner distribution function of a state $\rho$ satisfies the energy identity
\[ \int_{\RR^{2n} } \vert z\vert^2 P_{\rho}(z) \ dz = \int_{\RR^{2n}} \vert z \vert^2 \rho(z) \ dz = \tr(\rho x^2) + \tr(\rho p^2) = \tr((2a^*a+1)\rho) \]
where $x$ and $p$ are the position and momentum operator. 

Thus, the classical $L^2$-Wasserstein distance, corresponds in this formalism to an energy penalization and we define the optimal transport functional with phase-space variable square penalization
\[ \inf_{m. \rho} \int_0^1 \int_{\RR^{2n}} \frac{\vert m(t,z) \vert^2}{\rho(t,z)} \ dz \ dt \]
where $\rho$ satisfies the Fokker-Planck equation
\[ \partial_t \rho(t,z) + \operatorname{div}(m(t,z)) = \beta \Delta \rho(t,z) , \quad \rho(0,z)=\rho_0(z), \ \rho(1,z)=\rho_1(z)\]
with parameter $\beta  \ge 0,$ where $\beta=0$ corresponds to the optimal transport in $L^2$-Wasserstein distance \cite{CL18} and $\beta>0$ to the Schr\"odinger bridge problem \cite{LYO18}. 

\begin{prop}[Separability]

Let $\rho^{(i)}_{\theta}$ be a family of Gaussian states on Hilbert spaces $L^2(\RR^{2n(i)})$, and $\rho_{\theta}:=\bigotimes_{i=1}^N \rho^{(i)}_{\theta}$, then the Wasserstein information matrix satisfies 
\[ G_W(\rho)= \sum_{i=1}^N G_W(\rho^{(i)}).\]
\end{prop}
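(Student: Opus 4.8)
The plan is to reduce the statement to the classical Wasserstein information matrix of Section \ref{sec:NGF} applied to Wigner distributions, and then to exploit that a tensor product of states corresponds to a \emph{product} of Wigner functions. First I would record the factorization. Since the characteristic function \eqref{eq:charfunction} of a tensor product factorizes, $\chi_{\rho_\theta}(z)=\prod_{i=1}^N \chi_{\rho^{(i)}_\theta}(z_i)$, the Wigner transform \eqref{eq:Wigner} turns this into a product of the individual Wigner densities, so that on the total phase space $P_{\rho_\theta}(z_1,\dots,z_N)=\prod_{i=1}^N P_{\rho^{(i)}_\theta}(z_i)$. Because Gaussian states have pointwise positive Wigner functions, the entire computation of $G_W$ can be carried out with the classical Wasserstein metric tensor applied to $P_{\rho_\theta}$, so it suffices to prove additivity for this classical object.

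Next I would solve the continuity equation defining the score functions. Writing $P:=P_{\rho_\theta}$ and $p^{(i)}:=P_{\rho^{(i)}_\theta}$, the logarithmic derivative splits, $\partial_{\theta_k}P = P\sum_{i=1}^N \partial_{\theta_k}\log p^{(i)}$. Letting $\phi_k^{(i)}$ be the single-factor score solving $\partial_{\theta_k}p^{(i)}=-\operatorname{div}_{z_i}(p^{(i)}\nabla_{z_i}\phi_k^{(i)})$ on the $i$-th phase space, I would propose the additive ansatz $\Phi_k:=\sum_{i=1}^N \phi_k^{(i)}(z_i)$, whose full gradient is the block vector $\nabla\Phi_k=(\nabla_{z_1}\phi_k^{(1)},\dots,\nabla_{z_N}\phi_k^{(N)})$. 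A direct divergence computation, using that $\nabla_{z_i}\phi_k^{(i)}$ acts only in the $i$-th block and that the remaining factors are normalized, gives $\operatorname{div}(P\nabla\Phi_k)=-\sum_{i=1}^N (\partial_{\theta_k}p^{(i)})\prod_{j\neq i}p^{(j)}=-\partial_{\theta_k}P$, so the ansatz solves the full continuity equation; by uniqueness of the score modulo $\operatorname{ker}(\nabla)$ it is the score function.

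Finally I would insert this into the quadratic form $G_W(\theta)_{kl}=\int \langle \nabla\Phi_k,\nabla\Phi_l\rangle P\,dz$. Because the gradient decomposes into orthogonal phase-space blocks, the inner product collapses to $\sum_{i=1}^N \langle \nabla_{z_i}\phi_k^{(i)},\nabla_{z_i}\phi_l^{(i)}\rangle$ with no cross terms, and integrating out the variables $z_j$ for $j\neq i$ against $\int p^{(j)}\,dz_j=1$ leaves exactly $\sum_{i=1}^N \int \langle \nabla_{z_i}\phi_k^{(i)},\nabla_{z_i}\phi_l^{(i)}\rangle p^{(i)}\,dz_i=\sum_{i=1}^N G_W(\rho^{(i)})_{kl}$. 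The common parameter-space factor $G_\theta$ of \eqref{eq:GW} appears as a prefactor on every summand and therefore factors through the sum, yielding the claimed identity. The main obstacle I anticipate is the rigorous verification of the Wigner factorization together with the uniqueness of the score on the unbounded domain, since the additive ansatz must be identified with the genuine minimizer of the Benamou--Brenier functional; once that is secured, the remainder is a block-diagonal bookkeeping argument driven by the orthogonality of the phase-space blocks and normalization of the factors.
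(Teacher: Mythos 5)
Your proposal is correct and follows the same overall route as the paper: both arguments reduce the quantum statement to a classical one by observing that the characteristic function \eqref{eq:charfunction} of a tensor product factorizes, so that by the Fourier relation \eqref{eq:Wigner} the Wigner function of $\rho_{\theta}$ is the product of the individual (pointwise positive, Gaussian) Wigner densities, after which everything is a statement about the classical Wasserstein information matrix of a product density. The only difference lies in how that classical step is handled: the paper simply cites \cite[Prop.~5]{LZ20} for the additivity of the classical Wasserstein information matrix over independent factors, whereas you prove that step directly via the additive score ansatz $\Phi_k=\sum_{i}\phi_k^{(i)}(z_i)$, checking that it solves the product continuity equation and then using the block orthogonality of the gradients together with the normalization $\int p^{(j)}\,dz_j=1$ to kill all cross terms. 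Your version is thus self-contained (you have in effect reproved the cited proposition), at the cost of having to invoke uniqueness of the score modulo $\operatorname{ker}(\nabla)$ on the unbounded phase space --- a technical point you correctly flag, which for Gaussian densities is unproblematic and which the paper's citation simply absorbs.
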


\begin{proof}
It follows directly from \eqref{eq:charfunction} that the characteristic function of a tensor product is the product of the individual characteristic functions. Using the Fourier transform and \eqref{eq:Wigner}, this immediately translates into the Wigner functions being a product of Wigner functions \eqref{eq:Wigner}. The result then follows from \cite[Prop. $5$]{LZ20}.
\end{proof}

\section{Examples}
\label{sec:Ex}
In this section, we demonstrate the quantum transport information matrix and its related gradient and Hamiltonian flows in some well-known probability models. 

\subsection{Examples for the quantum Wigner distribution}
\subsubsection{Gaussian mixture model} 

For Gaussian states $\rho_i$ we consider the Gaussian Wigner probability distributions $P_{\rho_i}$ associated to them. 

Let $X_i \sim \mathcal N(\mu_i, \Sigma_i)$ be normal random variables, then it follows that 
\[ \mathbb E(X_i) = m_i, \text{Var}(X_i) = \mathbb E(X_iX_i^* ) -\mathbb E(X_i)\mathbb E(X_i)^* = \Sigma.\]

Let $X= \sum_{i=1}^N \lambda_i X_i$ be a Gaussian mixture with $\lambda_i \ge 0$ summing up to one, then clearly
$\mu_X:=\mathbb E(X)=\sum_{i=1}^N \lambda_i \mu_i$ and also for the second moment $m_{X_i}:=\mathbb E(X_iX_i^*)$ we find 
\[ m_{X}:=E_X(xx^*)= \sum_{i=1}^N \lambda_i \mu_{X_i}.\]
Thus, the covariance matrix is given by 
\[ \text{Var}(X) = \sum_{i=1}^N \lambda_i \Sigma_i + \sum_{i=1}^N \lambda_i \mu_i \mu_i^* -\mathbb E(X) \mathbb E(X)^*\]
where $\sum_{i=1}^N \lambda_i \mu_i \mu_i^* -\mathbb E(X) \mathbb E(X)^* \ge 0$ by Jensen's inequality.
Thus, since the variance of a mixture is increasing, the condition $\Sigma_i + i \nu \ge 0$ is satisfied for the extremal states and clearly the state associated with the mixture $X$ is 
\[\rho = \sum_{i=1}^N \lambda_i \rho_i. \]

To parametrize multivariate Gaussian distributions $\mathcal N(\mu, \Sigma)$ that are Wigner functions of Gaussian states, it is natural to consider the parameter space $\theta=(\mu, \Sigma) \in \Theta:= \RR^{2m} \times \{\gamma \in \RR^{2m \times 2m};\gamma >0 \text{ and } \gamma+i \nu > 0\}.$ 
The Wasserstein metric tensor for the multivariate Gaussian model is 
\[ g_{\theta}(\xi,\eta) = \langle \mu_{\xi}, \mu_{\eta}  \rangle+ \tr(S_{\xi} \Sigma S_{\eta})\]
for $\xi= (\mu_{\xi}, \Sigma_{\xi})$ and $\eta =(\mu_{\eta}, \Sigma_{\eta})$ and $S_{\xi}$ and $S_{\eta}$ solving the Lyapunov equations 
\[ \Sigma_{\xi}=\{S_{\xi}, \Sigma\} \text{ and } \Sigma_{\eta} =\{S_{\eta}, \Sigma\}.\]

In fact, for $Q =Q^*$, we can define the map $L_{\Sigma}(Q):=\int_0^{\infty} e^{-\Sigma t} Q e^{-\Sigma t} \ dt,$ solving Lyapunov equation $Q=\{L_{\Sigma}(Q), \Sigma\}$, then $L_{\Sigma}(\Sigma_{\eta})= S_{\eta}$ and $L_{\Sigma}(\Sigma_{\xi})= S_{\xi}.$

This way, setting $G_W:=\indic_{\RR^{2n}} \oplus (L_{\Sigma_{\eta}} \Sigma L_{\Sigma_{\eta}})$ we find that 
\[ g_{\theta}(\xi, \eta)= \langle (\mu_{\xi}, \Sigma_{\xi}), G_W  (\mu_{\eta}, \Sigma_{\eta}) \rangle.\]

\begin{ex}[Gaussian states; Numerical solution]
\label{ex:Wigner1}
We consider two Gaussian states with associated Wigner distributions and parameters $\theta^0:=(\Sigma_0,\mu_0)$ and $\theta^1:=(\Sigma_1,\mu_1)$
\begin{equation}
\begin{split}
&\Sigma_{0}:=\begin{pmatrix} 26 & 1 \\ 1 & 1 \end{pmatrix} \text{ and } \Sigma_{1}:=\begin{pmatrix} 1 & 1 \\ 1 & 2 \end{pmatrix}.
 \end{split}
 \end{equation}
which are easily shown to satisfy $\Sigma + i\nu \ge 0$ and expectation values
\begin{equation}
\begin{split}
\mu_{0}:=(-1,-1)^t, \text{ and }  \mu_{1}:=(2,7)^t.
\end{split}
\end{equation}
For Wigner functions
\[ W(\Sigma,\mu)(x,\xi)= \frac{ e^{-\frac{1}{2}\langle (x,\xi)^t-\mu, \Sigma^{-1}((x,\xi)^t-\mu) \rangle} }{2\pi \sqrt{\vert \Sigma \vert}} \]
we then want to analyze the optimal transport plan between $W(\theta^0)$ and $W(\theta^1)$.

\begin{figure}
\centering
\begin{minipage}[b]{0.45\textwidth}
\includegraphics[width=\linewidth]{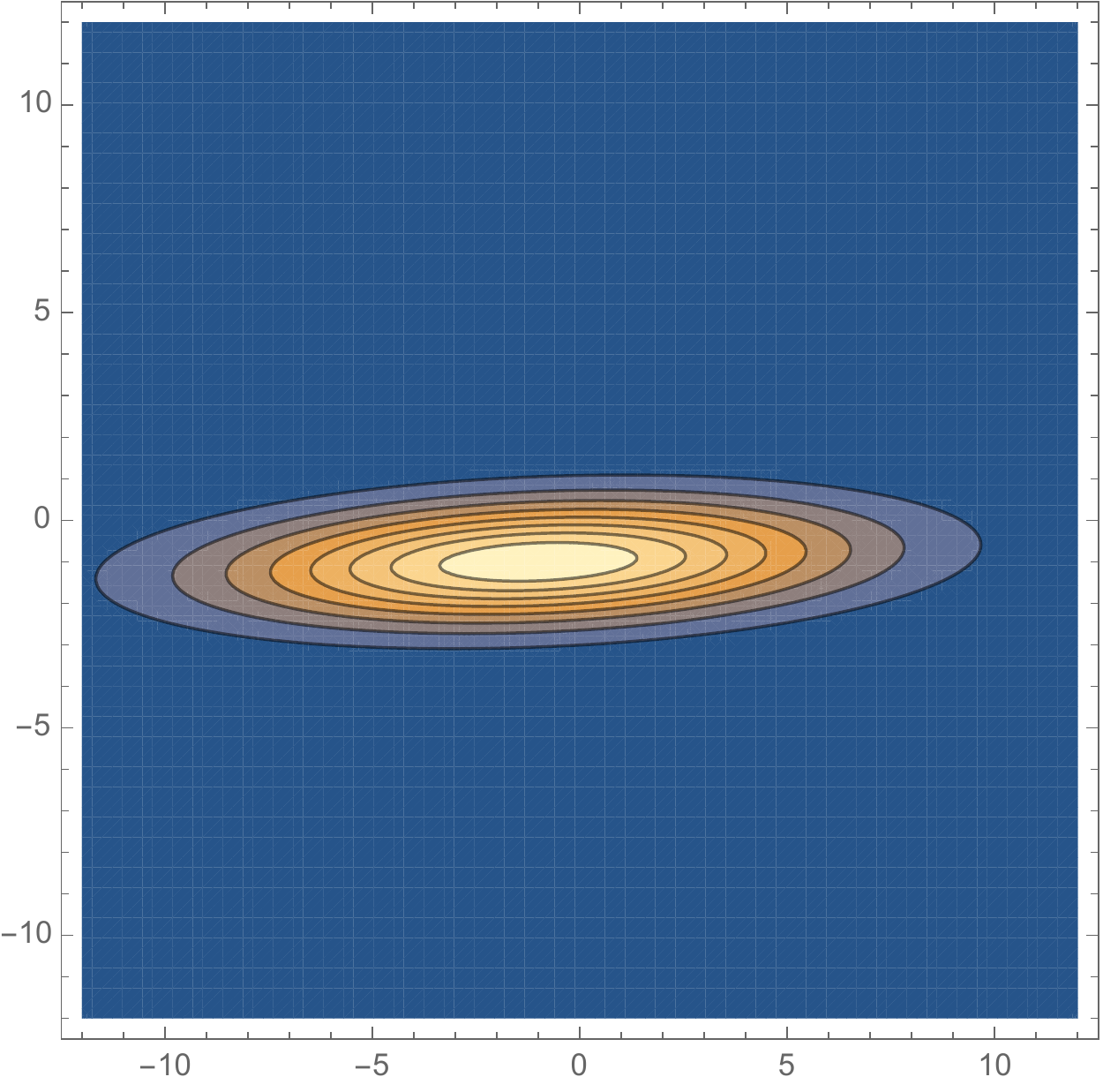}
\caption{The Wigner function $W(\Sigma_0, \mu_0)$.}
\end{minipage}
\begin{minipage}[b]{0.45\textwidth}
\includegraphics[width=\linewidth]{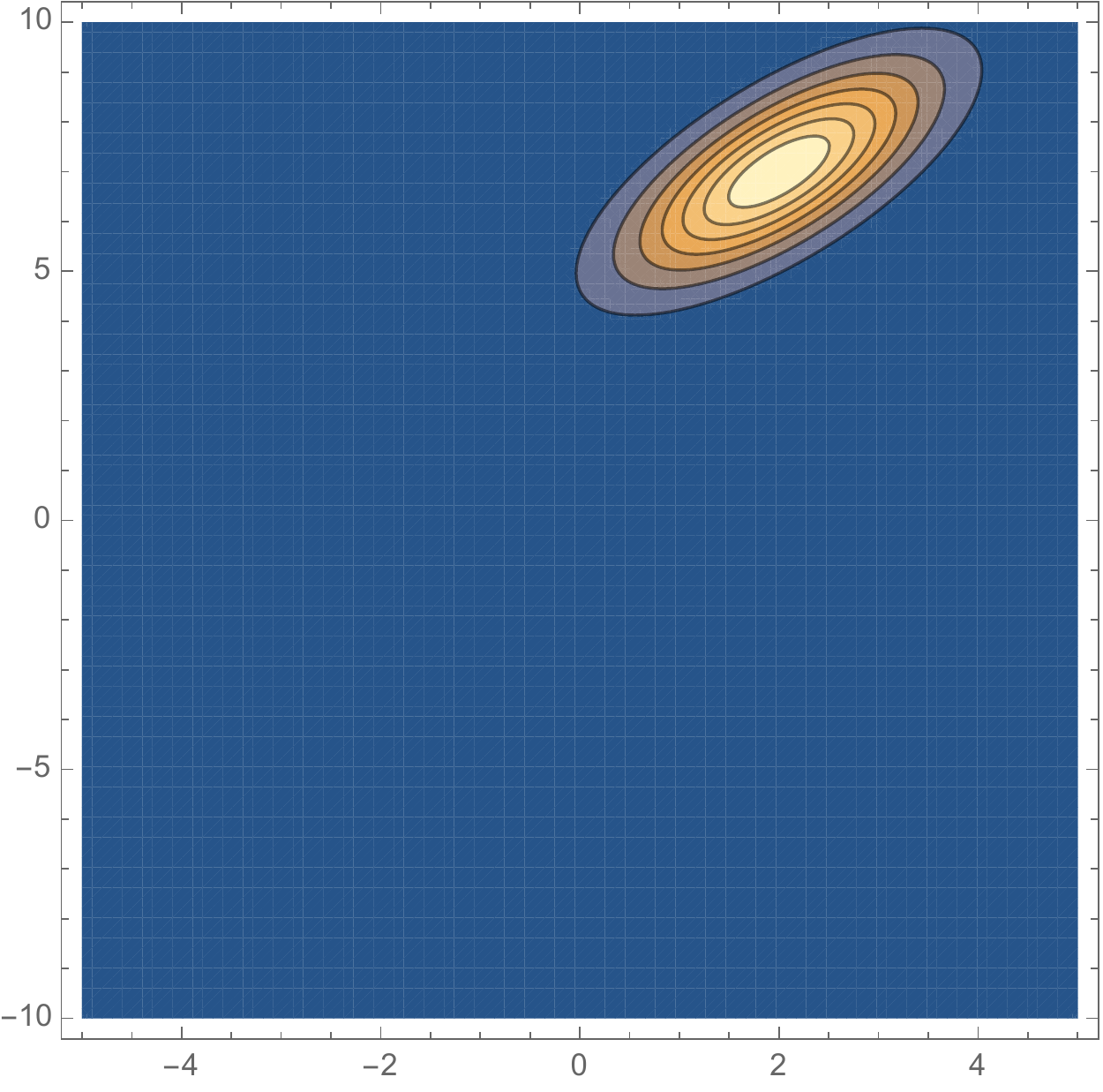} 
\caption{The Wigner function $W(\Sigma_1, \mu_1)$.}
\end{minipage}
\label{Figurecrit}
\end{figure}

Recall that our objective is to find geodesics on the parameter manifold $(\Theta, g_{\theta})$ that minimize the square geodesic distance
 \begin{equation}
W^2_Q(\rho(\bullet,\theta^0),\rho(\bullet,\theta^1)) = \inf_{\substack{\theta \in C^1(0,1)\cap C[0,1]\\ \theta(0)=\theta^0 , \theta(1)=\theta^1}}\left\{ \int_0^1 \langle \dot{\theta}(t), G_W(\theta(t)) \dot{\theta}(t)\rangle \ dt \right\}.
 \end{equation}

We then discretize the integral of the optimal control problem as

\begin{equation}
\begin{split}
&\min_{\theta_i; 1\le i \le N-1} N^{-1} \sum_{i=1}^{N-1} \left\langle \left(\frac{\theta_{i+1}-\theta_i}{N} \right), G_W(\theta_i) \left( \frac{\theta_{i+1}-\theta_i}{N} \right) \right\rangle \\
&=\min N^{-3} \sum_{i=1}^{N-1} \left( \left\lVert \mu_{i+1}-\mu_{i}\right\rVert^2 + \tr\left( (S_{\theta_{i+1}}-S_{\theta_{i}}) \Sigma_i (S_{\theta_{i+1}}-S_{\theta_{i}}) \right) \right)
\end{split}
\end{equation}
with boundary conditions $\theta_0 = \theta^0$ and $\theta_N=\theta^1.$

This minimization problem can be easily solved using a simple Monte-Carlo algorithm minimizing \eqref{eq:functional} that only accepts transitions to states that satisfy the two constraints
\[ \Sigma_i\ge 0 \text{ and } \Sigma_i + i \nu \ge 0.\]
The numerical solution to the quantum transport problem of the two parametrized Gaussian states is illustrate in Figure \ref{fig:optimaltransport}.
\end{ex}

\begin{figure}
\includegraphics[width=\linewidth]{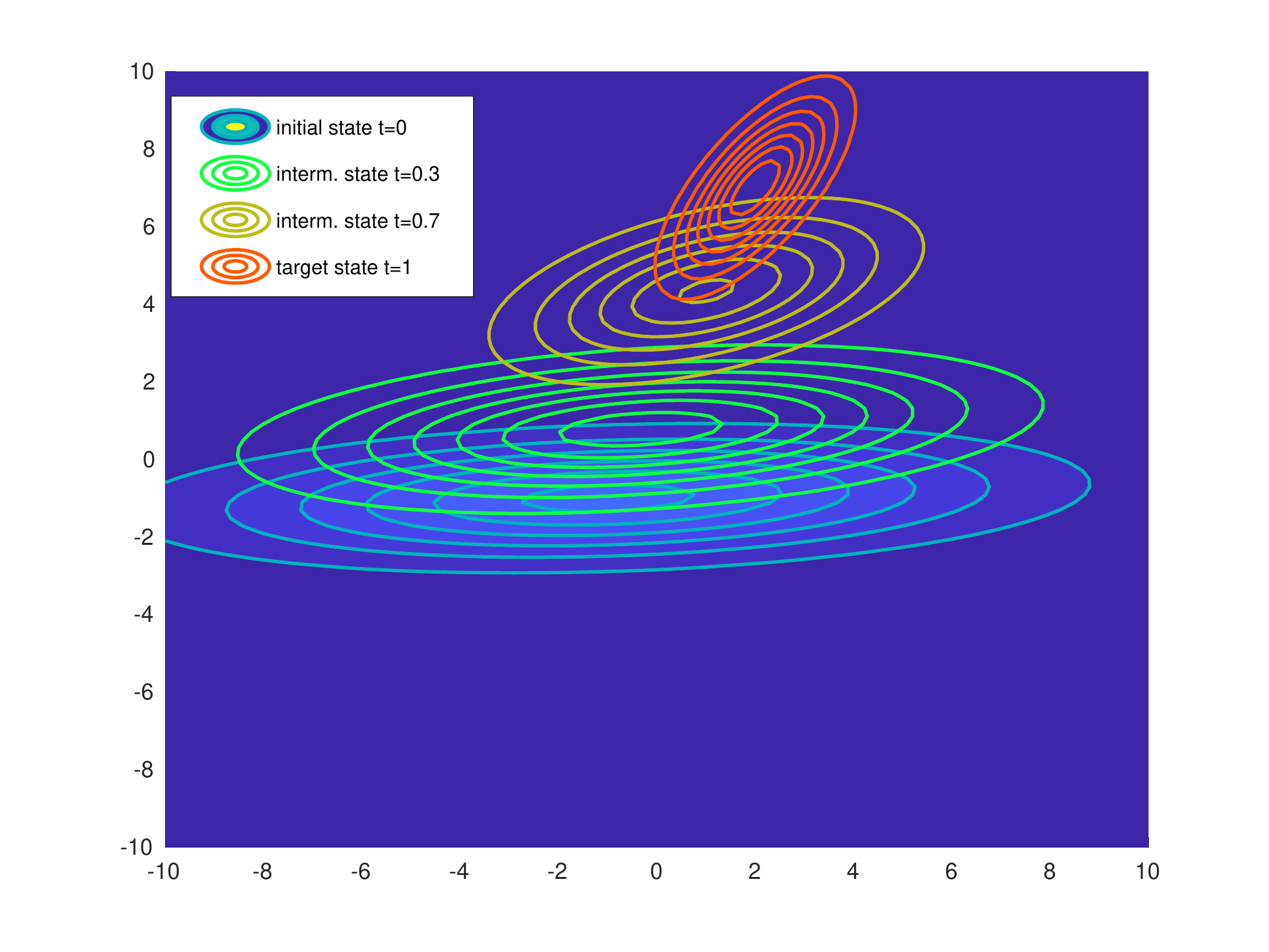}
\caption{Optimal quantum transport map from quantum state with Wigner function $W(\theta^0)$ to quantum state with Wigner function $W(\theta^1)$.}
\label{fig:optimaltransport}
\end{figure}

\subsection{Examples involving the quantum fermionic Fokker-Planck equation}
\label{sec:QFFPE}
\begin{ex}[Fermionic Fokker-Planck equation; Analytic solution]
\label{ex:1}
We consider the fermionic Fokker-Planck equation as introduced in Subsection \ref{subsec:FFP} for simplest case $n=1,$ i.e. $\mathfrak C$ can be identified with the two-dimensional Hilbert space $\operatorname{span}\{ \operatorname{id}_{\CC^{2 \times 2}},\sigma_x\}$ in which case we can solve the problem analytically.
 
 The grading operator is defined by 
\[\Gamma(\operatorname{id})=\operatorname{id}\text{ and }\Gamma(\sigma_x)=-\sigma_x.\]

The faithful states in $\mathfrak C$ are then parametrized by 

\[ (-1,1) \ni \theta \mapsto \rho(\theta):= \operatorname{id}+\theta \sigma_x.\]
We can diagonalize this density operator using the unitary map $U= 2^{-1/2}(\sigma_x- \sigma_z).$ This way, 
$U \rho(\theta) U = \operatorname{diag}(1+\theta,1-\theta).$
The derivative is given by 
\[ \nabla (\alpha \operatorname{id}+\beta \sigma_x) = \beta  \operatorname{id}.\]
The operator $L_{\rho(\theta),\Gamma(\rho(\theta))} (\operatorname{id})= \int_0^1 (\rho(\theta))^{1-s}(\rho(-\theta))^{s} \ ds$ becomes therefore after conjugating with $U$

\[UL_{\rho(\theta),\Gamma(\rho(\theta))} (\operatorname{id})U=\int_0^1 (1-\theta)^{1-s} (1+\theta)^{s} \ ds \operatorname{id}_{\CC^{2 \times 2}} = \frac{\theta}{\operatorname{artanh}(\theta)}\operatorname{id}_{\CC^{2 \times 2}}.\]

This implies that $-\Delta_{\rho(\theta)} \vert_{\operatorname{span}(\sigma_x)}=\frac{\theta}{\operatorname{artanh}(\theta)} \operatorname{id}.$ Using that $D_{\theta}\rho(\theta)=\sigma_x$ and that $G_{\theta}=\operatorname{id}$, we find from \eqref{eq:GW} that
\[G_W(\theta)= \frac{\operatorname{artanh}(\theta)}{\theta}.\]

As before, our objective is to find geodesics on the parameter manifold $(\Theta, g_{\theta})$ that minimize the square geodesic distance
 \begin{equation}
W^2_Q(\rho(\bullet,\theta^0),\rho(\bullet,\theta^1)) = \inf_{\substack{\theta \in C^1(0,1)\cap C[0,1]\\ \theta(0)=\theta^0 , \theta(1)=\theta^1}}\left\{ \int_0^1 \mathcal L(\theta(t),\dot{\theta}(t)) \ dt \right\}
 \end{equation}
where $\mathcal L(\theta(t),\dot{\theta}(t)):=\dot{\theta}(t)^2 G_W(\theta(t))$ is the Lagrangian. The associated Euler-Lagrange equation
\[ \partial_1 \mathcal L(\theta(t),\dot{\theta}(t)) - \frac{d}{dt} \partial_2 \mathcal L(\theta(t),\dot{\theta}(t))=0\]
 becomes
\begin{equation}
\begin{split}
 &\dot{\theta}(t)^2 G_W'(\theta(t))- 2 \frac{d}{dt} (\dot{\theta}(t)G_W(\theta(t)))\\
 &=-\dot{\theta}(t)^2 G_W'(\theta(t))-2\ddot{\theta}(t) G_W(\theta(t))=0.
 \end{split}
\end{equation}
Using that $G_W(\theta)>0$, we find the identities for $\pm \dot{\theta}(t) > 0$
\[  \frac{d}{dt}\log(G_W(\theta(t)))=\frac{G_W'(\theta(t))\dot{\theta}(t)}{G_W(\theta(t))} \text{ and } \frac{d}{dt} \log(\pm \dot{\theta}(t))= \frac{\ddot{\theta}(t)}{\dot{\theta}(t)}.\]
Assuming that $\theta^1>\theta^0$ in the sequel and thus dropping $\pm$ for simplicity, the Euler-Lagrange equation is then equivalent, for some constant $k \in \RR$, to the ODE
\[ -\log(G_W(\theta(t)))+k=  \log(\pm \dot{\theta}(t)).\]
Introducing then the function $\zeta(x):=\frac{\operatorname{Li}_2(x)
-\operatorname{Li}_2(-x)
}{2}$ in terms of the dilogarithm, $\operatorname{Li}_2$, we can then specify the constant $e^k$ by
\[ e^k:=\int_0^1  \frac{\dot{\theta}(t)}{\theta(t)} \operatorname{artanh}(\theta(t)) \ dt = \int_{\theta^0}^{\theta^1} \frac{\operatorname{artanh}(x)}{x} \ dx =\zeta(\theta^1)-\zeta(\theta^0).\]
In particular, this allows us to explicitly state the solution to the optimal transport problem
\begin{equation}
\begin{split}
\label{eq:solution}
 \theta(t)= \zeta^{-1} \left( t \zeta(\theta^1)+(1-t) \zeta(\theta^0) \right).
\end{split}
\end{equation}
We illustrate this by choosing three different pairs of parameters
\begin{equation}
\begin{split}
\label{eq:three}
\theta^0_1&=-\tfrac{1}{2}, \ \theta^1_1=\tfrac{1}{2}, \\ 
\theta^0_2&=-\tfrac{9}{10}, \ \theta^1_2=\tfrac{9}{10}, \\ 
\theta^0_3&=-\tfrac{999}{1000}, \ \theta^1_3=\tfrac{999}{1000}.
\end{split}
\end{equation}

\begin{figure}
\includegraphics[width=12cm]{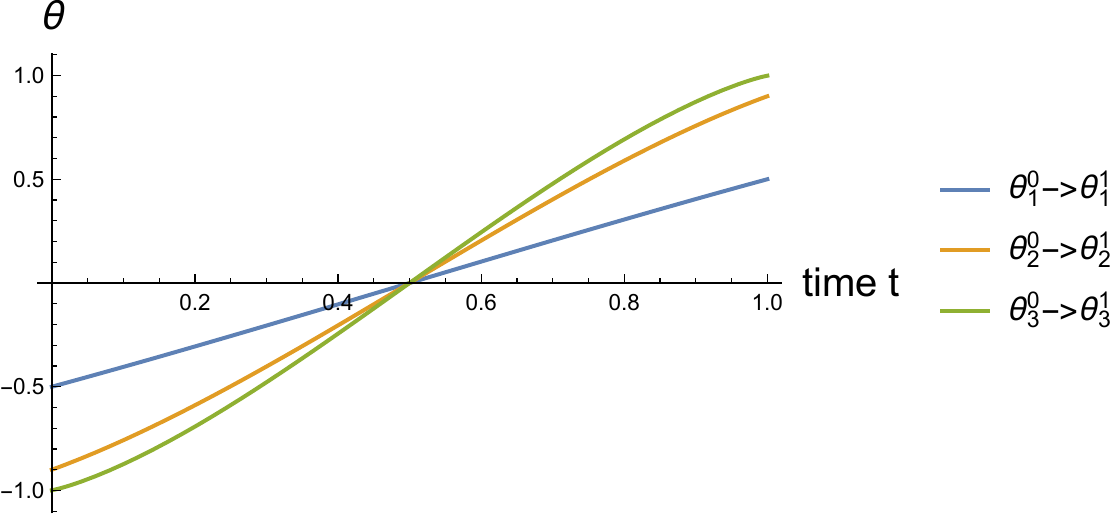}
\caption{Illustration of solution \eqref{eq:solution} for three different boundary parameters \eqref{eq:three}.}
\end{figure}

\subsubsection{Anti-commutator case}

We can repeat the previous analysis by considering instead of the Feynman-Kubo-Mori multiplication operator $L_{\rho(\theta),\Gamma(\rho(\theta))}$ the anti-commutator \eqref{eq:anti-comm} which satisfies
\begin{equation}
L^{\text{ac}}_{\rho(\theta)}(\operatorname{id})=\rho(\theta).
\end{equation}
Thus, using that $\nabla^*(\rho(\theta))=1$, we find that $-\Delta_{\rho(\theta)} \vert_{\operatorname{span}(\sigma_x)} =1$ and therefore also 
\[ G_W(\theta)\equiv 1.\]
In particular, since the Lagrangian is just $\mathcal L(\dot{\theta}(t)) = \dot{\theta}(t)^2$, the geodesics in parameter space are just straight lines as the Euler-Lagrange equation $\ddot{\theta}(t)=0$ immediately shows.
\end{ex}

\subsubsection{Wasserstein natural gradient}
\label{sec:WNG}
We shall now also illustrate the Wasserstein natural gradient for the quantum Fokker-Planck equation as in Example \ref{ex:1} by minimizing the von Neumann entropy as objective function 
\[ R(\theta) = \tau(\rho(\theta) \log(\rho(\theta))).\]
From the matrix logarithm
\begin{equation}
 \log(\rho(\theta)) =\tfrac{1}{2} \begin{pmatrix} \log(1-\theta^2) & \log\left(\frac{1+\theta}{1-\theta} \right) \\ \log\left(\frac{1+\theta}{1-\theta} \right) & \log(1-\theta^2) \end{pmatrix}
 \end{equation}
 we then immediately see that 
\[R(\theta)= \frac{1}{2} \left( \log(1-\theta^2)  + \theta \log\left(\frac{1+\theta}{1-\theta} \right)\right)\] 
and hence $D_{\theta} R(\theta)= \operatorname{artanh}(\theta).$
Therefore, the Wasserstein gradient \eqref{eq:WGN} becomes $\nabla_g R(\theta) = G_W(\theta)^{-1} D_{\theta} R(\theta) = -\theta.$
The gradient descent equation therefore becomes in parameter space 
\[ \theta'(t) = -\theta(t)\]
which implies that we will converge exponentially fast to the unique minimizer the completely mixed state that corresponds to $\theta=0.$

\subsection{Channel parameter estimation-pushforward of quantum states}
\label{sec:CPE}
The idea of parameter estimation of probability densities constructed from the pushforward of possibly nonlinear activation functions, relevant for neural networks, has been investigated by the second author in \cite{LZ20}. 

In quantum theory the framework is somewhat different, since quantum operations on a physical system are described by linear (super)-operators, so-called quantum channels rather than non-linear one-dimensional functions. A quantum channel is a completely positive and trace preserving (CPTP) map. 
Thus, it is natural to consider the situation where a state is parametrized by the output of a quantum channel $\Phi_{\theta}$ depending on some parameter $\theta$ which is the quantum analogue of the pushforward of probability measures by parametrized functions. 

We shall  illustrate how such problems can be studied in our framework by considering the quantum depolarizing channel \eqref{eq:depchann} with the quantum fermionic Fokker-Planck equation, introduced in Section \ref{subsec:FFP}, for $n=2$.

\begin{ex}[Depolarizing channel and quantum Fokker-Planck dynamics]
Consider the fermionic Fokker-Planck equation with $n=2$, the fermionic operators are then given by
\[ Q_1:=\sigma_x \otimes \operatorname{id}_{\CC^2} \text{ and } Q_2:=\sigma_z \otimes\sigma_x \]
and thus 
\begin{equation}
\begin{split}
Q^{(0,0)}&=  \operatorname{id}_{\CC^4}, \ 
Q^{(1,0)}=  \sigma_x \otimes \operatorname{id}_{\CC^2}, \
Q^{(0,1)}=  \sigma_z \otimes\sigma_x, \ Q^{(1,1)}= -i \sigma_y \otimes \sigma_x.
\end{split}
\end{equation}
Thus, we find for the gradients
\begin{equation}
\begin{split}
\nabla (Q^{0,0}) &= \mathbf 0, \quad 
\nabla (Q^{1,0}) = (Q^{(0,0)}, 0)\\
\nabla (Q^{0,1}) &= (0, Q^{(0,0)}), \text{ and } 
\nabla (Q^{1,1}) = (Q^{(0,1)},-Q^{(1,0)}).  
\end{split}
\end{equation}
We consider the depolarizing channel for some density operator $\rho =\frac{1}{2}( Q^{(1,0)}+Q^{(0,0)})$ and limiting state $\frac{1}{2}( Q^{(0,1)}+Q^{(0,0)})$ 
\begin{equation}
\label{eq:depchann}
\Phi_{\theta}(\rho) =\frac{1}{2}\left( e^{-\theta}Q^{(1,0)} + (1-e^{-\theta}) Q^{(0,1)}+Q^{(0,0)}\right).
\end{equation}
Then, after applying the anti-commutation operator \eqref{eq:anti-comm}
\begin{equation}
\begin{split}
L^{\text{ac}}_{\Phi_{\theta}(\rho)}(\nabla (Q^{1,0}))&= \frac{1}{2}(e^{-\theta}Q^{(1,0)} +(1-e^{-\theta})Q^{(0,1)}+Q^{(0,0)}, 0) \\
L^{\text{ac}}_{\Phi_{\theta}(\rho)}(\nabla (Q^{0,1}))&=  \frac{1}{2}(0,e^{-\theta} Q^{(1,0)} + (1-e^{-\theta}) Q^{(0,1)}+Q^{(0,0)}) \\
L^{\text{ac}}_{\Phi_{\theta}(\rho)}(\nabla (Q^{1,1}))&=\frac{1}{2}((1-e^{-\theta})Q^{(0,0)}+Q^{(0,1)},-e^{-\theta}Q^{(0,0)}-Q^{(1,0)})
\end{split}
\end{equation}
we find for the Laplacian 
\begin{equation}
\begin{split}
-\Delta_{\Phi_{\theta}}(Q^{(1,0)})&=\frac{1}{2}\left( (1-e^{-\theta})Q^{(1,1)}+Q^{(1,0)} \right)\\
-\Delta_{\Phi_{\theta}}(Q^{(0,1)})&=\frac{1}{2}\left( -e^{-\theta}Q^{(1,1)}+Q^{(0,1)}\right) \\
-\Delta_{\Phi_{\theta}}(Q^{(1,1)})&= \frac{1}{2}\left((1-e^{-\theta})Q^{(1,0)}- e^{-\theta}Q^{(0,1)}\right)+Q^{(1,1)}.
\end{split}
\end{equation}

This means that the Laplacian has a matrix representation 
\begin{equation}
\begin{split}
-\Delta_{\Phi_{\theta}}\vert_{\operatorname{span}\{\operatorname{id}\}^{\perp}} =\frac{1}{2} \begin{pmatrix}1 & 0 & 1 - e^{-\theta}   \\ 0&1&  - e^{-\theta} \\
1 - e^{-\theta}  & - e^{-\theta}  & 2 \end{pmatrix} 
\end{split}
\end{equation}
with inverse
\begin{equation}
\begin{split}
(-\Delta_{\Phi_{\theta}}\vert_{\operatorname{span}\{\operatorname{id}\}^{\perp}})^{-1} =\frac{2}{2 e^\theta+e^{2 \theta}-2}  \left(
\begin{array}{ccc}
 2 e^{2 \theta}-1& 1-e^{\theta} & e^\theta
   \left(1-e^\theta \right) \\
 1-e^\theta & 2 e^\theta+e^{2 \theta}-1 & e^\theta \\
 e^\theta
   \left(1-e^\theta \right) & e^\theta  &
   e^{2 \theta} \\
\end{array}
\right).
\end{split}
\end{equation}
Since, $D_{\theta} \Phi_{\theta}(\rho)= \frac{e^{-\theta}}{2}\left(  Q^{(0,1)}-Q^{(1,0)} \right)$ we thus find 
\begin{equation}
\begin{split}
G_W(\theta)&=\tau( D_{\theta} \Phi_{\theta}(\rho) (-\Delta_{\Phi_{\theta}}\vert_{\operatorname{span}\{\operatorname{id}\}^{\perp}})^{-1} D_{\theta} \Phi_{\theta}(\rho)) \\
&=\frac{e^{-2\theta}\left((2e^{2\theta}-1)-2(1-e^{\theta})+(2e^{\theta}+e^{2\theta}-1) \right)}{4e^{\theta}+2e^{2\theta}-2}= \frac{3-4e^{-2\theta}+6e^{-\theta}}{4e^{\theta}+2(e^{2\theta}-1)}.
\end{split}
\end{equation}
\end{ex}

\section{Discussion}
In this paper, we pull back the quantum Wasserstein-2 metric into a parameterized quantum statistical models. This allows us to develop a quantum Wasserstein/transport information matrix. Using this matrix, we develop the quantum transport natural gradient methods and apply them to the quantum statistical learning problems. Besides, we also consider the optimal control problem of quantum transport natural gradient flows, which leads to the derivation of quantum Schr{\"o}dinger bridge problem. Several analytical examples, such as the transport of Gaussian states on the statistical manifold in Example \ref{ex:Wigner1}, the transport of states for the gradient induced by quantum fermionic Fokker-Planck equation in Section \ref{sec:QFFPE} on the statistical manifold, and the parameter estimation problem for channels in Subsection \ref{sec:CPE}, are provided. 

Our results initialize the joint study among quantum information geometry and quantum optimal transport. We pull back the quantum system dynamics into a finite-dimensional parameter space generated by statistical and machine learning models. We call this area {\em quantum transport information geometry}. Here the interaction study between quantum Fisher and quantum Wasserstein information matrices becomes essential. We expect that this joint study would be useful in developing transport estimation theory of quantum information theory, and designing AI-driven quantum computing algorithms for quantum systems. In the future, we will continue this line of study following transport information geometry \cite{Li1, Li2}.  

\smallsection{Acknowledgements} 
This work was supported by the EPSRC grant EP/L016516/1 for the University of Cambridge CDT, the CCA and generous funding from IPAM in Los Angeles (S.B.). 
Wuchen Li was supported by University of South Carolina, start up funding. 
%


\end{document}